\documentclass[a4paper, 11pt]{article}
\usepackage[top=1.0 in,bottom=1.0 in,left=1.0 in,right=1.0 in]{geometry}
\usepackage{amsmath,amsthm}
\usepackage{amssymb}
\usepackage[colorlinks,linkcolor=blue,citecolor=blue]{hyperref}
\usepackage{float}
\usepackage{cite}
\usepackage{epic}
\usepackage{tikz}
\usepackage{graphicx}
\usepackage{accents}
\usepackage{slashed}
\usepackage{mathrsfs} 
\usepackage{bm}

\newtheorem{theorem}{Theorem}[section]

\numberwithin{equation}{section}
\newtheorem{remark}{Remark}[section]

\allowdisplaybreaks

\title{Elliptic soliton solutions of lattice eigenfunction equations of the KdV type}

\author{Xing Li$^{1}$\footnote{Email: xing\_li@jsnu.edu.cn},
		~~ Ying Shi$^{2}$\footnote{Email: yingshi@zust.edu.cn},
		~~ Da-jun Zhang$^{3,4}$\footnote{Corresponding author. Email: djzhang@shu.edu.cn},
        ~~ Ruixiao Zhu$^{3,4}$
		\\ [5pt]
		{\small $^1$School of Mathematics and Statistics, Jiangsu Normal University, Xuzhou 221116, China} \\
		{\small $^{2}$School of Science, Zhejiang University of Science and Technology, Hangzhou 310023, China} \\
		{\small $^{3}$Department of Mathematics, Shanghai University, Shanghai 200444, China} \\
		{\small $^{4}$Newtouch Center for Mathematics of Shanghai University,  Shanghai 200444, China} 	}

\date{\today}

\begin{document}
\maketitle
\begin{abstract}
It is known that some integrable lattice equations can be formulated by the eigenfunctions
of the Lax triplet of the lattice potential Korteweg-de Vries (lpKdV) equation;
the eigenfunctions satisfy a discrete Abel formula in which the `integration' constant
defines the $\delta$ parameter in some equations in the Adler-Bobenko-Suris (ABS) classification.
In this paper, we present a full Darboux transformation for the Lax triplet of the lpKdV equation,
and eigenfunctions are formulated in terms of Casoratians.
When the eigenfunctions take the elementary discrete Lam\'e functions,
we get the elliptic $N$-soliton solution for the lpKdV equation after $N$-fold Darboux transformation.
We then determine the constant in the discrete Abel formula in this case.
The eigenfunctions are used to formulate elliptic $N$-soliton solutions for
various equations in the ABS list, such as  H3($\delta$), $Q1(\Delta)$, Q2 and Q3($\Delta$) equations.

\begin{description}
\item[Keywords:] eigenfunction equation; elliptic soliton; Casoratian;
lattice KdV; Darboux transformation
\item[PACS:] 02.30.Ik; 02.30.Jr; 05.45.Yv; 02.30.Gp
\end{description}
\end{abstract}

\section{Introduction}\label{sec:1}

In general, a (1+1)-dimensional integrable system can be formulated from the compatibility of
its Lax pair, whose solutions are known as eigenfunctions of the Lax pair.
By eigenfunction equation we mean the equation governed by eigenfunctions.
A well-known example is that the modified Korteweg-de Vries (mKdV) equation
is formulated by the eigenfunction of the Lax pair of the KdV equation via
the Cole--Hopf transformation.
Konopelchenko used to investigate various eigenfunction equations
related to the KdV equation \cite{Kon-RMP-1990}.
In a recent paper \cite{WZZZ-JPA-2024}, four lattice eigenfunction equations
were constructed from the Lax triplet (see eq.\eqref{eq:laxpair-H1})
of the lattice potential KdV (lpKdV) equation,
denoted as leKdV-I, leKdV-I($\delta$),  leKdV-II and leKdV-II($\delta$), respectively,
where $\delta$ is a parameter.
They correspond to the H3(0), H3($\delta$), Q3(0) and Q3($\delta$) equations
in the  Adler--Bobenko--Suris (ABS) classification \cite{ABS-CMP-2003}
for the 3-dimensionally consistent quadrilateral equations with extra constraints
(affine linear, D4 symmetry and tetrahedron property).
Since Q3($\delta$) is the top equation in the ABS list (except Q4),
in light of the known degeneration relation (cf.\cite{NAH-JPA-2009}),
the equations Q1(0), Q1($\delta$) and Q2 in the ABS list can also be formulated
by the eigenfunctions of the Lax triplet of the lpKdV equation \cite{WZZZ-JPA-2024}.
Thus, one can solve these equations by determining eigenfunctions.

The purpose of this paper is to present elliptic solitons in Casoratian form
for the lattice eigenfunction equations of the KdV type.
Here by ``elliptic solitons'' we mean the solitons living on an elliptic background,
which is different from the elliptic solitons introduced by J.-L. Verdier \cite{Ver-1988}
that are independent of dispersion relations (see a short review in Appendix C in \cite{LSZ-Non-2025}).
Such type of elliptic solitons have been derived for the KdV equation
as early as in 1970s by  Kuznetsov and Mikhailov
using the Gel'fand--Levitan--Marchenko (GLM) equation \cite{KM-JETP-1974}
and by Wahlquist using B\"acklund transformation \cite{Wahl-1976}.
Later, for the ABS equations (except Q4),
elliptic solitons in Cauchy matrix forms were constructed in 2010 by
Nijhoff and Atkinson through the Cauchy matrix approach \cite{NA-IMRN-2010}.
More recent progress includes
the establishment of the bilinear framework \cite{LZ-JNS-2022,LZ-2024},
the elliptic direct linearization scheme \cite{NSZ-CMP-2023,SZN-LMP-2026}
and the GLM equation \cite{LSZ-Non-2025},
and transformation and degeneration approaches based on the Sato theory
\cite{Kak-arx-2023,Nak-SIGMA-2026}.
There are also some results obtained by using Darboux transformation, e.g.\cite{WYSL-JMP-2025}.

In this paper, we will construct explicit expressions for the eigenfunctions
of the Lax triplet of the lpKdV equation by means of Darboux transformation.
Note that the discrete Crum theorem for the discrete Schr\"odinger spectral problem \eqref{eq:laxpair-H1-A}
has been established in \cite{LW-AMSSA-2006} and \cite{ZPZ-TMP-2020}.
However, the Darboux transformation for the whole Lax triplet \eqref{eq:laxpair-H1}
is still incomplete: Crum's theorem for \eqref{eq:laxpair-H1-C} is not verified so far.
We will prove it in this paper. This will allow us to have a Casoratian expression
for the eigenfunction associated with elliptic solitons.
After that, we will determine the constant in the discrete Abel formula.
Note that this constant plays a key role in defining the parameter $\delta$
in the leKdV equations and in H3($\delta$), Q1($\Delta$) and Q3($\Delta$).
Finally, we will present elliptic solitons in Casoratian forms
for Q3($\Delta$), Q3($0$), Q2, Q1($\Delta$), Q1($0$),
H3($\delta$)  and H3($0$).

The paper is organized as follows.
In Section \ref{sec:2}, we describe a full Darboux transformation for the
Lax triplet \eqref{eq:laxpair-H1}, getting Casoratian formulae for the
elliptic solitons of the lpKdV equation  as well as the eigenfunctions.
In Section \ref{sec:3}, we construct elliptic $N$-soliton solutions of
the lattice eigenfunction equations, together with determining the constant
in the discrete Abel formula.
In Section \ref{sec:4}, the Q1($\Delta$) and Q2 equations are treated via the
degenerating limits of the Q3($\Delta$) equation.
Explicit formulae for the elliptic solitons are obtained in terms of Casoratians.
Conclusions are provided in Section \ref{sec:5}.
Some determinant identities and properties of Weierstrass elliptic functions used in the paper
are collected in Appendix \ref{app:det-elliptic}.

\section{Darboux transformation and elliptic solitons for the lpKdV}\label{sec:2}

\subsection{Lax triplet}\label{sec:2-1}

In \cite{WZZZ-JPA-2024}, the lattice eigenfunction equations of the KdV type
were constructed from the following triplet
\begin{subequations}\label{eq:laxpair-H1}
\begin{align}
&\widetilde{\widetilde{\phi}}
+(\widetilde{\widetilde{u}}-u)\widetilde{\phi}
+\alpha^2\phi=\mathrm{\mathbf{k}}^2\phi ,\label{eq:laxpair-H1-A}\\
&\widehat{\widehat{\phi}}
+(\widehat{\widehat{u}}-u)\widehat{\phi}
+\beta^2\phi=\mathrm{\mathbf{k}}^2\phi,\label{eq:laxpair-H1-B}\\
&\widehat{\phi}=\widetilde{\phi}-(\widehat{u}-\widetilde{u})\phi.\label{eq:laxpair-H1-C}
\end{align}
\end{subequations}
Here $u=u(n,m)$ is a function of $(n,m)\in \mathbb{Z}^2$,
$\phi=\phi(u,\mathbf{k})$ is the eigenfunction,
$\alpha$ and $\beta$ are respectively the spacing parameters in $n$-direction and $m$-direction,
$\mathbf{k}$ is the spectral parameter.
For the tilde-hat notations, we denote
\begin{equation}
\widetilde{u}=u(n+1,m)=E_{[n]}u,~~ \widehat{u}=u(n,m+1)=E_{[m]}u, ~~
\widehat{\widetilde{u}}=u(n+1,m+1)=E_{[n]}E_{[m]}u,
\end{equation}
where $E_{[n]}$ and $E_{[m]}$ serve as the shift operators in
$n$-direction and $m$-direction, respectively.

The set \eqref{eq:laxpair-H1} can be viewed as a Lax triplet
for the lpKdV equation
 \begin{equation}\label{eq:H1}
 (u - \widehat{\widetilde{u}})(\widehat{u} - \widetilde{u}) = \alpha^2 - \beta^2.
 \end{equation}
In fact, shifting \eqref{eq:laxpair-H1-C} in the $n$-direction and replacing
$\widetilde{\widetilde{\phi}}$ using \eqref{eq:laxpair-H1-A} yield
\begin{equation}\label{eq:phi-th-from-n}
 \widehat{\widetilde\phi}
 =(\mathrm{\mathbf{k}}^2-\alpha^2)\phi+(u-\widehat{\widetilde u})\widetilde\phi.
\end{equation}
On the other hand, shifting \eqref{eq:laxpair-H1-C} in the $m$-direction and then using
\eqref{eq:phi-th-from-n} to replace $\widehat{\widetilde\phi}$, we obtain
\begin{equation}\label{eq:phi-hh-1}
 \widehat{\widehat\phi}
 =\left(\mathrm{\mathbf{k}}^2-\alpha^2
 +(\widehat{\widehat u}-\widehat{\widetilde u})
 (\widehat u-\widetilde u)\right)\phi+(u-\widehat{\widehat u})\widetilde\phi.
\end{equation}
Meanwhile, eliminating $\widehat\phi$ from \eqref{eq:laxpair-H1-B} and \eqref{eq:laxpair-H1-C} yields
\begin{equation}\label{eq:phi-hh-2}
 \widehat{\widehat\phi}
 =\left(\mathrm{\mathbf{k}}^2-\beta^2
 +(\widehat{\widehat u}-u)
 (\widehat u-\widetilde u)\right)\phi+(u-\widehat{\widehat u})\widetilde\phi.
\end{equation}
Then, equating the coefficients of $\phi$ in \eqref{eq:phi-hh-1} and \eqref{eq:phi-hh-2}
immediately gives rise to  the  lpKdV equation \eqref{eq:H1}.

\subsection{Darboux transformation}\label{sec:2-2}

Equation \eqref{eq:laxpair-H1-A} (or \eqref{eq:laxpair-H1-B})
can be considered as the discrete Schr\"odinger spectral problem.
The discrete Crum theorem for \eqref{eq:laxpair-H1-A}
has been proved in \cite{LW-AMSSA-2006} and \cite{ZPZ-TMP-2020}.
In this subsection, we show the full Darboux transformation for the whole Lax triplet \eqref{eq:laxpair-H1}.
The point is we need to prove the Darboux transformation of \eqref{eq:laxpair-H1-A} is valid
for \eqref{eq:laxpair-H1-B} and \eqref{eq:laxpair-H1-C} as well.
Consequently, we will get general Casoratian expressions for the potential function $u$
and eigenfunction $\phi$.

Let $u_0$ be an arbitrary seed potential in the triplet \eqref{eq:laxpair-H1},
and let $\phi(\mathbf{k})$ be an eigenfunction associated with $u_0$ and the
spectral parameter $\mathbf{k}$.
Choose $N$ distinct spectral parameters $\mathbf{k}_1^2,\ldots,\mathbf{k}_N^2$, and let  $\phi_i=\phi(n,m;\mathbf{k}_i)$ be the corresponding eigenfunctions.
In other words, for each $i$, the pair $(u_0,\phi_i)$ satisfies
the Lax triplet \eqref{eq:laxpair-H1} with $u=u_0$ and $\mathbf{k}^2=\mathbf{k}_i^2$.
Define
\begin{equation}\label{eq:r1-s1-def}
 r_1=-\frac{\widetilde{\phi}_1}{\phi_1},
 \qquad
 s_1=-\frac{\widehat{\phi}_1}{\phi_1}.
\end{equation}
From \eqref{eq:laxpair-H1-C} it is easy to see that
\begin{equation}\label{2.8}
\widetilde{u}_0-\widehat{u}_0=r_1-s_1,
\end{equation}
and consequently, from \eqref{eq:laxpair-H1-C} we have
\begin{equation}\label{2.9}
\widehat{\phi}-\widetilde{\phi}=(r_1-s_1)\phi.
\end{equation}
The one-step Darboux transformation for \eqref{eq:laxpair-H1-A} introduced in \cite{ZPZ-TMP-2020}
is composed of
\begin{subequations}\label{eq:one-step-DT-phi-u}
\begin{align}
 \phi[1] &:=\widetilde{\phi}+r_1\phi,\label{eq:one-step-DT-phi}\\
 u[1]&:=\widetilde{u}_0-r_1,\label{eq:one-step-DT-u}
\end{align}
\end{subequations}
which indicates that $\phi[1]$ and $u[1]$ satisfy \eqref{eq:laxpair-H1-A} as well, i.e.
\begin{equation}
\widetilde{\widetilde{\phi[1]}}
+(\widetilde{\widetilde{u[1]}}-u[1])\widetilde{\phi[1]}
+\alpha^2\phi[1]=\mathrm{\mathbf{k}}^2\phi[1],\label{eq:laxpair-H1-A-1}
\end{equation}
Obviously, by virtue of relations \eqref{2.8} and \eqref{2.9},
the transformations in \eqref{eq:one-step-DT-phi-u} can be equivalently defined as
\begin{subequations}\label{eq:one-step-DT-phi-u-2}
\begin{align}
 \phi[1] &:=\widehat{\phi}+s_1\phi,\label{eq:one-step-DT-phi-2}\\
 u[1]&:= \widehat{u}_0-s_1.\label{eq:one-step-DT-u-2}
\end{align}
\end{subequations}

In the following, we show that the Darboux transformations in \eqref{eq:one-step-DT-phi-u}
or \eqref{eq:one-step-DT-phi-u-2} are valid for \eqref{eq:laxpair-H1-B} and \eqref{eq:laxpair-H1-C}.
First, the  Lax triplet \eqref{eq:laxpair-H1} for $(u_0, \phi_1)$ reads
\begin{subequations}\label{eq:s1-r1-relation}
\begin{align}
&\bigl(\widetilde r_1-\widetilde{\widetilde u}_{0}+u_{0}\bigr)r_1=\mathbf{k}_1^2-\alpha^2,\label{eq:s1-r1-b}\\
&\bigl(\widehat s_1-\widehat{\widehat u}_{0}+u_{0}\bigr)s_1=\mathbf{k}_1^2-\beta^2, \label{eq:s1-r1-c}\\
& s_1-r_1=\widehat{u}_0-\widetilde{u}_0. \label{eq:s1-r1-a}
\end{align}
\end{subequations}
Corresponding to \eqref{eq:laxpair-H1-B}, we introduce
an operator
\begin{equation}
\mathcal L_{\beta}
 =E_{[m]}^2
 +(\widehat{\widehat u}_{0}-u_{0})E_{[m]}
 +\beta^2.
\end{equation}
By virtue of \eqref{eq:s1-r1-c},  the operator $\mathcal L_{\beta}-\mathbf{k}_1^2$ admits the factorization
\begin{equation}\label{eq:L-factorization-2}
\mathcal L_{\beta}-\mathbf{k}_1^2
=\left(E_{[m]}-\widehat s_1+\widehat{\widehat u}_{0}-u_{0}\right)
\left(E_{[m]}+s_1\right).
\end{equation}
Define a new operator by dressing
\begin{equation}\label{eq:L1-factorization-2}
\mathcal L_{\beta}[1]-\mathbf{k}_1^2
=\left(E_{[m]}+s_1\right)(\mathcal L_{\beta}-\mathbf{k}_1^2)\left(E_{[m]}+s_1\right)^{-1},
\end{equation}
i.e.
\begin{align}
\mathcal L_{\beta}[1]-\mathbf{k}_1^2
&=\left(E_{[m]}+s_1\right)
 \left(E_{[m]}-\widehat s_1+\widehat{\widehat u}_{0}-u_{0}\right) \nonumber\\
 &=E_{[m]}^2
 +(s_1-\widehat{\widehat s}_1
 +\widehat{\widehat{\widehat u}}_{0}-\widehat u_{0})E_{[m]}
 +\beta^2-\mathbf{k}_1^2,
\end{align}
which, in light of \eqref{eq:one-step-DT-phi-u-2}, gives rise to
\begin{equation}
\mathcal L_{\beta}[1]
 =E_{[m]}^2 +(\widehat{\widehat{u[1]}}-u[1])E_{[m]}+\beta^2
\end{equation}
On the other hand, acting the dressing from \eqref{eq:L1-factorization-2}
on $\phi[1] =(E_{[m]}+s_1)\phi$ immediately yields
\begin{equation}
 \qquad
\mathcal L_{\beta}[1]  \phi[1]=\mathbf{k}^2 \phi[1].
\end{equation}
This means \eqref{eq:one-step-DT-phi-u-2} provides a Darboux transformation for \eqref{eq:laxpair-H1-B}.
Note that the above dressing technique is the same as what has been used in \cite{ZPZ-TMP-2020}
for \eqref{eq:laxpair-H1-A},
while here we have made use of the equivalent transformation \eqref{eq:one-step-DT-phi-u-2}.

Now we consider the  relation \eqref{eq:laxpair-H1-C}.
An up-tilde shift of \eqref{eq:laxpair-H1-C} yields
\begin{equation*}
 \widehat{\widetilde{\phi}}_{1}
 =\widetilde{\widetilde{\phi}}_{1}
 -\left(\widehat{\widetilde u}_{0}-\widetilde{\widetilde u}_{0} \right)\widetilde\phi_{1}
 =r_1
 \left(\widetilde r_1+\widehat{\widetilde u}_{0}-\widetilde{\widetilde u}_{0} \right)\phi_{1}.
\end{equation*}
On the other hand, we have an alternative expression
\begin{equation*}
 \widehat{\widetilde{\phi}}_{1}= -\widehat r_1 \widehat \phi_1
 = \widehat r_1
 \left(r_1+\widehat u_{0}-\widetilde u_{0}\right)\phi_{1},
\end{equation*}
where \eqref{eq:s1-r1-a} has been used.
Thus we obtain a relation
\begin{equation}\label{eq:r1-compatible}
 \widehat r_1
 \left(\widehat u_{0}-\widetilde u_{0}\right)
 =
 r_1 \left(\widetilde r_1-\widehat r_1+
 \widehat{\widetilde u}_{0}-\widetilde{\widetilde u}_{0} \right).
\end{equation}
Combining it and the one-step potential \eqref{eq:one-step-DT-u} yields
\begin{equation}\label{eq:key-one}
 \widehat r_1(\widehat u_{0}-\widetilde u_{0})
 = r_1\bigl(\widehat{u[1]}-\widetilde{u[1]}\bigr).
\end{equation}
Now we come to check that the transformation \eqref{eq:one-step-DT-phi-u} works for
\eqref{eq:laxpair-H1-C}.
From \eqref{eq:one-step-DT-phi} and \eqref{eq:laxpair-H1-C} we have
\begin{align}
 \widehat{\phi[1]} &=\widehat{\widetilde\phi}+\widehat r_1 \widehat\phi \nonumber\\
 &=\widetilde{\widetilde\phi}
 - \left(\widehat{\widetilde u}_{0}-\widetilde{\widetilde u}_{0} \right)\widetilde\phi
 + \widehat r_1(\widetilde\phi-(\widehat u_{0}-\widetilde u_{0})\phi) \nonumber\\
 &=\widetilde{\widetilde\phi}
 + \left(\widehat r_1 -\widehat{\widetilde u}_{0}+\widetilde{\widetilde u}_{0} \right)\widetilde\phi
 -\bigl(\widehat{u[1]}-\widetilde{u[1]}\bigr)r_1\phi,
\end{align}
where the relation \eqref{eq:key-one} has been used.
On the other hand, we have
\begin{equation}
 \widetilde{\phi[1]} =\widetilde{\widetilde\phi}+\widetilde r_1 \widetilde\phi.
\end{equation}
Combining them together yields
\begin{align}
 \widehat{\phi[1]} &=\widetilde{\phi[1]}
 + \left(\widehat r_1 - \widetilde r_1 -\widehat{\widetilde u}_{0}+\widetilde{\widetilde u}_{0} \right)\widetilde\phi
 -\bigl(\widehat{u[1]}-\widetilde{u[1]}\bigr)r_1\phi \nonumber\\
 & =\widetilde{\phi[1]}
 - \bigl(\widehat{u[1]}-\widetilde{u[1]} \bigr)\widetilde\phi
 -\bigl(\widehat{u[1]}-\widetilde{u[1]}\bigr)r_1\phi \nonumber\\
 & =\widetilde{\phi[1]}-\bigl(\widehat{u[1]}-\widetilde{u[1]}\bigr)\phi[1].
\end{align}
Hence,
the pair $(u[1],\phi[1])$ satisfies the same Lax triplet \eqref{eq:laxpair-H1}.

Once we get the one-step Darboux transformation,
it is then routine work to iterate it successively and get results
for the $N$-fold Darboux transformation.
By the discrete Crum theorem (cf.\cite{ZPZ-TMP-2020}), the resulting $N$-fold transformation
admits a compact representation in terms of Casoratians.
We introduce the eigenfunction vector
\begin{equation}\label{eq:phi-e-v}
 \boldsymbol{\phi}
 =\bigl(\phi_{1},\phi_{2},\ldots,\phi_{N},\phi\bigr)^{\mathrm T}
\end{equation}
and a $N+1$ component  constant vector
\begin{equation}\label{sigma}
 \boldsymbol{\sigma}=(0,\ldots,0,1)^{\mathrm T}.
\end{equation}
Define the Casoratian determinants
\begin{equation}\label{eq:casoratian-F}
\begin{split}
 \left|
 \widehat{N-1}(\boldsymbol{\phi}),
 \boldsymbol{\sigma}\right|_{[a]}
 &:=\left|\boldsymbol{\phi},~
 E_{[a]}\boldsymbol{\phi},~
 E_{[a]}^2\boldsymbol{\phi},~
 \ldots,
 E_{[a]}^{N-1}\boldsymbol{\phi},~
 \boldsymbol{\sigma}
 \right|,
\end{split}
\end{equation}
and
\begin{equation}\label{eq:casoratian-H}
\begin{split}
 \left| \widehat{N}(\boldsymbol{\phi}) \right|_{[a]}
 &:=\left|
 \boldsymbol{\phi},~
 E_{[a]}\boldsymbol{\phi},~
 E_{[a]}^2\boldsymbol{\phi},~
 \ldots,
 E_{[a]}^N\boldsymbol{\phi}
 \right|,
\end{split}
\end{equation}
where $a\in\{n,m\}$.
With these shorthand notations,
the results for the $N$-fold Darboux transformation can be presented in the following theorem
(cf.\cite{ZPZ-TMP-2020}).

\begin{theorem}\label{thm:H1}
Let $\phi_{1},\ldots,\phi_{N}$ be the eigenfunctions of \eqref{eq:laxpair-H1}
corresponding to $u=u_0$ and the distinct spectral parameters
$\mathbf{k}_1^2,\ldots,\mathbf{k}_N^2$. Then the $N$-fold Darboux
transformation gives
\begin{equation}
\begin{split}
 \phi[N]
 &=
 \frac{
 \left|\widehat{N}(\boldsymbol{\phi})\right|_{[n]}}
 {\left| \widehat{N-1}(\boldsymbol{\phi}),~ \boldsymbol{\sigma}\right|_{[n]} },\\
 u[N]&=E_{[n]}^N u_{0}
+ \frac{ \left|
\widehat{N-2}(\boldsymbol{\phi}),
 N(\boldsymbol{\phi}),
 \boldsymbol{\sigma}
 \right|_{[n]}}
 {\left|
 \widehat{N-1}(\boldsymbol{\phi}),
 \boldsymbol{\sigma}\right|_{[n]}}.
\end{split}
\end{equation}
The transformed pair $(u[N],\phi[N])$ satisfies the Lax triplet \eqref{eq:laxpair-H1}.
 Consequently, $u[N]$ provides a new solution of the lpKdV equation \eqref{eq:H1}.
\end{theorem}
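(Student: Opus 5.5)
The proof iterates the one-step Darboux transformation \eqref{eq:one-step-DT-phi-u}, which has already been shown to preserve all three members of the Lax triplet \eqref{eq:laxpair-H1}. We induct on $N$. Put $\phi_i[0]=\phi_i$ and $u[0]=u_0$. At step $j$, use $\phi_j[j-1]$ as the generating eigenfunction and define
\[
r_j[j-1]=-\frac{\widetilde{\phi_j[j-1]}}{\phi_j[j-1]},\qquad
\phi[j]=\widetilde{\phi[j-1]}+r_j[j-1]\,\phi[j-1],\qquad
u[j]=\widetilde{u[j-1]}-r_j[j-1].
\]
Because $(u[j-1],\phi_i[j-1])$ satisfies the full triplet for every $i$ (induction hypothesis), the one-step result shows that $(u[j],\phi_i[j])$ and $(u[j],\phi[j])$ satisfy the full triplet, including \eqref{eq:laxpair-H1-B} and \eqref{eq:laxpair-H1-C}. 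This holds as long as $\phi_j[j-1]\neq0$, which is guaranteed generically because the $\mathbf{k}_i^2$ are distinct and the Casoratian denominators are then nonzero. Consequently $u[N]$ solves the lpKdV equation \eqref{eq:H1} by the compatibility computation of Section \ref{sec:2-1}. The remaining task is to identify the closed forms.

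For $\phi[N]$ I would invoke the discrete Crum theorem for \eqref{eq:laxpair-H1-A} from \cite{ZPZ-TMP-2020}, or reprove it by induction. The claim is
\[
\phi[N]=\frac{|\widehat{N}(\boldsymbol\phi)|_{[n]}}{W_N},\qquad W_N=|\widehat{N-1}(\phi_1,\dots,\phi_N)|_{[n]},
\]
where $W_N$ equals $|\widehat{N-1}(\boldsymbol\phi),\boldsymbol\sigma|_{[n]}$ up to sign, by expanding along $\boldsymbol\sigma$. In the inductive step, $\phi[N]=\widetilde{\phi[N-1]}-\bigl(\widetilde{\phi_N[N-1]}/\phi_N[N-1]\bigr)\phi[N-1]$. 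Both $\phi[N-1]$ and $\phi_N[N-1]$ are Casoratian ratios with the common denominator $W_{N-1}$, so this becomes a combination of $2\times2$ minors of Casoratians. The Jacobi (Sylvester) identity, or the Plücker relation listed in Appendix \ref{app:det-elliptic}, then gives
\[
\bigl|\widehat{N}(\boldsymbol\phi)\bigr|\,W_{N-1}
=\widetilde{\bigl|\widehat{N-1}(\boldsymbol\phi')\bigr|}\,\bigl|\widehat{N-1}(\boldsymbol\phi'')\bigr|
-\bigl|\widehat{N-1}(\boldsymbol\phi')\bigr|\,\widetilde{\bigl|\widehat{N-1}(\boldsymbol\phi'')\bigr|},
\]
where $\boldsymbol\phi'$ and $\boldsymbol\phi''$ are the appropriate $N$-component subvectors. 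This yields the formula for $\phi[N]$.

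For $u[N]$ I telescope:
\[
u[N]=E_{[n]}^N u_0-\sum_{j=1}^{N}E_{[n]}^{N-j}\,r_j[j-1].
\]
Each $r_j[j-1]$ is a ratio of shifted Casoratians, $-\widetilde{(W_j/W_{j-1})}\big/(W_j/W_{j-1})$. The sum should collapse to $\tau_N/W_N$, where $\tau_N=|\widehat{N-2}(\boldsymbol\phi),N(\boldsymbol\phi),\boldsymbol\sigma|$ is the "derivative" Casoratian, obtained by replacing the last column $E^{N-1}$ by $E^N$. To prove this I show by induction that $u[j]-\widetilde{u[j-1]}=-r_j[j-1]$ matches
\[
\frac{\tau_j}{W_j}-E_{[n]}\frac{\tau_{j-1}}{W_{j-1}}.
\]
This is again a bilinear Casoratian identity: the discrete analogue of $\partial\ln(W_j/W_{j-1})$ differences, proved with the Laplace expansion identity from the Appendix applied to an $(2j)\times(2j)$ block determinant.

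The main obstacle is this last bilinear identity for $u[N]$, together with the bookkeeping of signs and shifts between $|\widehat{N-1}(\boldsymbol\phi),\boldsymbol\sigma|$ and $W_N$. If the direct approach becomes messy, I would instead show that the shift $\widetilde{\ }$ of $W_N$ and $\tau_N$ satisfy the bilinear lpKdV form, which is known from \cite{ZPZ-TMP-2020}, and simply quote Crum's theorem there for the Casoratian representation. The genuinely new content, validity for \eqref{eq:laxpair-H1-B} and \eqref{eq:laxpair-H1-C}, is already settled by the one-step argument plus induction.
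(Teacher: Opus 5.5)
Your proposal follows the paper's route. The paper shows that the one-step transformation \eqref{eq:one-step-DT-phi-u} preserves all three equations of \eqref{eq:laxpair-H1}, calls the iteration routine, and takes the Casoratian forms from the discrete Crum theorem of \cite{ZPZ-TMP-2020}, citing rather than reproving it. Your induction, telescoping of $u[N]$, and the fallback of quoting Crum's theorem all match this.

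If you do reprove the Casoratian formulas, fix one factor in your Desnanot--Jacobi display. Deleting the rows of $\phi_N,\phi$ and the columns $E_{[n]}^0,E_{[n]}^N$ from $|\widehat{N}(\boldsymbol\phi)|_{[n]}$ leaves the shifted Casoratian $E_{[n]}W_{N-1}$, not $W_{N-1}$. The correct identity is
\[
|\widehat{N}(\boldsymbol\phi)|_{[n]}\,E_{[n]}W_{N-1}=\widetilde{A}\,W_N-A\,\widetilde{W}_N ,
\]
where $A=|\widehat{N-1}(\phi_1,\ldots,\phi_{N-1},\phi)|_{[n]}$. This shifted factor is exactly what cancels the denominator in $\phi[N]=\bigl(\widetilde{A}W_N-A\widetilde{W}_N\bigr)/\bigl(E_{[n]}W_{N-1}\,W_N\bigr)$, which comes from $\widetilde{\phi[N-1]}$ and $\widetilde{\phi_N[N-1]}$. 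With the unshifted $W_{N-1}$ the identity already fails at $N=2$, where the complementary minor is $\widetilde{\phi}_1$, not $\phi_1$.

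The appendix contains only the Desnanot--Jacobi identity; no Plücker or Laplace-expansion identity is listed there. It is still enough for your $u[N]$ step. That step reduces to
\[
\tau_j\,E_{[n]}W_{j-1}-(E_{[n]}\tau_{j-1})\,W_j=(E_{[n]}W_j)\,W_{j-1}.
\]
This follows from Desnanot--Jacobi applied to the $(j+1)\times(j+1)$ matrix with rows $\phi_1,\ldots,\phi_j$ (columns $E_{[n]}^0,\ldots,E_{[n]}^j$), bordered by a unit row whose single $1$ sits in the column $E_{[n]}^{j-1}$.
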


\subsection{Elliptic solitons}\label{sec:2-3}

We now specialize the preceding Darboux--Crum construction to an elliptic background solution $u_0$.

Consider the Weierstrass functions $\sigma(z)$, $\zeta(z)$ and $\wp(z)$,
associated with a fixed $(2w, 2w')$-period lattice,
where $w$ and $w'$ are the two half periods of the elliptic function $\wp(z)$.
The identities for these functions used below are collected in Appendix \ref{app:det-elliptic}
of this paper.
The spectral parameter $\mathrm{\mathbf{k}}$ and the lattice parameters $\alpha$ and $\beta$ of the lpKdV equation are parameterized by the elliptic variables $k$, $a$ and $b$, respectively, through
\begin{equation}\label{eq:ell-para}
\mathrm{\mathbf{k}}^2=\wp(k)-e, \qquad \alpha^2=\wp(a)-e, \qquad \beta^2=\wp(b)-e.
\end{equation}
When $e$ is chosen as the value of $\wp(z)$ at a half-period $w$, i.e., $e=\wp(w)$,
 up to consistent choices of signs, the relation \eqref{eq:Lame-half-period} gives\cite{NA-IMRN-2010}
\begin{equation}
\mathrm{\mathbf{k}}=\Psi_k(w),\qquad
\alpha=\Psi_a(w),\qquad
\beta=\Psi_b(w).
\end{equation}
The lpKdV equation \eqref{eq:H1} admits the
background solution \cite{NA-IMRN-2010}
\begin{equation}\label{eq:u0e}
u_{0,[e]}=\zeta(\xi)-n\zeta(a)-m\zeta(b),
\qquad
\xi=na+mb+\xi_0,
\end{equation}
where $\xi_0\in \mathbb{C}$ and  the subscript $[e]$ stands for the elliptic case\footnote{
Note that the lpKdV equation
yields the potential KdV equation in continuum limit, and strictly speaking,
\eqref{eq:u0e} is not an elliptic function. Nevertheless, for fixed $n,m\in\mathbb Z$ and $\xi_0$, $\widetilde{\widetilde u}_{0,[e]}-u_{0,[e]}$ and $\widehat{\widehat u}_{0,[e]}-u_{0,[e]}$ are elliptic in each of the lattice parameters $a$ and $b$.}.
The  Lax triplet \eqref{eq:laxpair-H1} admits a discrete Lam\'e-type eigenfunction \cite{LZ-2024}
\begin{equation}\label{eq:phi-e}
\phi_{[e]}(k):=
\rho_0^{+}\left(\Phi_{a}(k)\right)^n
\left(\Phi_{b}(k)\right)^m
\Phi_{\xi}(-k)
+\rho_0^{-}\left(\Phi_{a}(-k)\right)^n
\left(\Phi_{b}(-k)\right)^m
\Phi_{\xi}(k),
\end{equation}
where  $\rho_0^+$ and $\rho_0^-$  are arbitrary constants.
Here, $k$ is assumed not to be any half-period  modulo the period lattice, ensuring the linear independence of the two eigenfunction components.
We omit the parameter $k$ when no confusion arises.
Choose distinct spectral parameters $\mathbf{k}_i^2=\wp(k_i)-e$,
and let
\begin{equation}\label{2.34}
\phi_{i,[e]}=\rho_{i,0}^{+}\left(\Phi_{a}(k_i)\right)^n
\left(\Phi_{b}(k_i)\right)^m
\Phi_{\xi}(-k_i)
+\rho_{i,0}^{-}\left(\Phi_{a}(-k_i)\right)^n
\left(\Phi_{b}(-k_i)\right)^m
\Phi_{\xi}(k_i),
\end{equation}
be the corresponding eigenfunctions.
The eigenfunction vector is defined as
 \begin{equation}\label{eq:seed-phi-ki}
\boldsymbol{\phi_{[e]}}=(\phi_{1,[e]},~\phi_{2,[e]},~\ldots,~\phi_{N,[e]},~\phi_{[e]})^T.
\end{equation}
Applying Theorem~\ref{thm:H1} to the above choices for $u_0$ and eigenfunctions
gives rise to the pair $(\phi[N],u[N])$ in terms of Casoratians,
which are new solutions of the Lax triplet \eqref{eq:laxpair-H1}.
In particular, $u[N]$ is an elliptic $N$-soliton solution of the lpKdV equation \eqref{eq:H1}
(cf.\cite{NA-IMRN-2010,LZ-2024}).

\begin{theorem}\label{thm:ellsol-H1}
The Lax triplet \eqref{eq:laxpair-H1} admits the following solutions in Casoratians:
\begin{align}\label{psi-u-ENDT-H1}
\phi[N]=\frac{|\widehat{N}(\boldsymbol{\phi}_{[e]})|_{[n]}}{|\widehat{N-1}(\boldsymbol{\phi}_{[e]}),
~\boldsymbol{\sigma}|_{[n]}}, \qquad
u[N]=E_{[n]}^Nu_{0,[e]}
+\frac{|\widehat{N-2}(\boldsymbol{\phi}_{[e]}),~N(\boldsymbol{\phi}_{[e]}),~\boldsymbol{\sigma}|_{[n]}}
{|\widehat{N-1}(\boldsymbol{\phi}_{[e]}),~\boldsymbol{\sigma}|_{[n]}}.
\end{align}
\end{theorem}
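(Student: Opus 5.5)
Theorem~\ref{thm:ellsol-H1} follows from Theorem~\ref{thm:H1} once we verify its hypotheses for the elliptic data. Concretely, we must show three things.
\begin{itemize}
\item[(i)] $u_{0,[e]}$ in \eqref{eq:u0e} solves the lpKdV equation \eqref{eq:H1} with $\alpha^2,\beta^2$ given by \eqref{eq:ell-para}.
\item[(ii)] For every $k$ (not a half-period), the function $\phi_{[e]}(k)$ in \eqref{eq:phi-e} satisfies all three relations of the triplet \eqref{eq:laxpair-H1}, with $u=u_{0,[e]}$ and $\mathbf{k}^2=\wp(k)-e$.
\item[(iii)] The spectral parameters $\mathbf{k}_i^2=\wp(k_i)-e$ are distinct and the denominators are not identically zero.
\end{itemize}
Once these hold, substituting $\phi_i=\phi_{i,[e]}$, $\phi=\phi_{[e]}$ and $u_0=u_{0,[e]}$ into Theorem~\ref{thm:H1} gives \eqref{psi-u-ENDT-H1} verbatim.

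For (i), write $u_0=\zeta(\xi)-n\zeta(a)-m\zeta(b)$. Then
\[
\widehat{\widetilde u}_0-u_0=\zeta(\xi+a+b)-\zeta(\xi)-\zeta(a)-\zeta(b),\qquad
\widehat u_0-\widetilde u_0=\zeta(\xi+b)-\zeta(\xi+a)+\zeta(a)-\zeta(b).
\]
Apply the standard three-term identity
\[
\bigl(\zeta(x)+\zeta(y)+\zeta(z)-\zeta(x+y+z)\bigr)\bigl(\ldots\bigr)=\wp(\cdot)-\wp(\cdot)
\]
from the appendix, in the form $(\zeta(\xi+a+b)-\zeta(\xi)-\zeta(a)-\zeta(b))(\zeta(\xi+a)-\zeta(\xi+b)-\zeta(a)+\zeta(b))=\wp(b)-\wp(a)$. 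This gives $(u_0-\widehat{\widetilde u}_0)(\widehat u_0-\widetilde u_0)=\wp(a)-\wp(b)=\alpha^2-\beta^2$.

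For (ii), treat each of the two summands of $\phi_{[e]}$ separately; by linearity it suffices to handle $\varphi=\Phi_a(k)^n\Phi_b(k)^m\Phi_\xi(-k)$, and the other follows by $k\to-k$. Relation \eqref{eq:laxpair-H1-C} divided by $\varphi$ reads
\[
\Phi_b(k)\frac{\Phi_{\xi+b}(-k)}{\Phi_\xi(-k)}-\Phi_a(k)\frac{\Phi_{\xi+a}(-k)}{\Phi_\xi(-k)}=-(\widehat u_0-\widetilde u_0).
\]
This is a direct consequence of the addition formula for $\Phi$, namely $\Phi_x(k)\Phi_y(k)=\Phi_{x+y}(k)(\zeta(x)+\zeta(y)+\zeta(k)-\zeta(x+y+k))$, since the $\zeta(k)$ and $\zeta(\xi)$ terms cancel in the difference.

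Relation \eqref{eq:laxpair-H1-A} becomes a quadratic identity in shifted ratios. Using the same addition formula twice, it reduces to
\[
\Phi_x(k)\Phi_{-x}(k)=\wp(k)-\wp(x)\quad\text{with }x=a,
\]
combined with the shift of the three-term identity. The difference $\wp(k)-\wp(a)$ then produces exactly $\mathbf{k}^2-\alpha^2$, because the constant $e$ cancels. Relation \eqref{eq:laxpair-H1-B} is identical with $a\to b$, $n\to m$. These Lamé-type verifications are exactly those cited from \cite{LZ-2024}, and I would invoke them rather than recompute.

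The main obstacle is (ii), in particular checking \eqref{eq:laxpair-H1-A}. There the $\xi$-dependent $\zeta$ terms must cancel against $\widetilde{\widetilde u}_0-u_0=\zeta(\xi+2a)-\zeta(\xi)-2\zeta(a)$. I would handle it by writing $r=-\widetilde\varphi/\varphi$ and verifying the Riccati form \eqref{eq:s1-r1-b}, which needs only one product of two addition formulas.

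For (iii), the factors $\Phi_a(\pm k_i)^n$ grow with different exponential rates for generic distinct $k_i$ with $\wp(k_i)$ distinct. Hence the Casoratian is a nonvanishing generalized exponential polynomial in $n$, so the denominator $|\widehat{N-1}(\boldsymbol\phi_{[e]}),\boldsymbol\sigma|_{[n]}$ is generically nonzero and the formulas are well defined.
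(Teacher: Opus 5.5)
Your proposal takes the same route as the paper. The paper obtains Theorem~\ref{thm:ellsol-H1} by inserting the elliptic seed $u_{0,[e]}$ and the Lam\'e eigenfunctions into Theorem~\ref{thm:H1}, citing \cite{NA-IMRN-2010,LZ-2024} for the facts you re-verify; your $\chi_{u,v}$ checks of \eqref{eq:laxpair-H1-C} and of the Riccati form \eqref{eq:s1-r1-b} are correct.

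Two small fixes are needed:
\begin{itemize}
\item[(a)] The $\xi$-independent product you display in (i) equals $\wp(a)-\wp(b)$, not $\wp(b)-\wp(a)$. You can see this by letting $\xi\to0$. The value $\wp(a)-\wp(b)$ is what your stated conclusion actually requires.
\item[(b)] In (iii) the entries also depend on $n$ through $\Phi_{\xi+ja}(\mp k_i)$, with $\xi=na+mb+\xi_0$, so the Casoratian is not an exponential polynomial in $n$. Its generic nonvanishing is better read off from the $2^N$-term expansion in terms of $V_{[e]}$ used in Section~\ref{sec:3-2}.
\end{itemize}
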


\section{Elliptic soliton solutions of the leKdV equations}\label{sec:3}

The Casoratian eigenfunctions obtained in the preceding section can be further used to construct solutions of a class of related  lattice equations.
As shown in \cite{WZZZ-JPA-2024},
the lpKdV Lax triplet \eqref{eq:laxpair-H1} leads to two classes of KdV-type
lattice eigenfunction equations and their $\delta$-extensions, namely,
leKdV-I, leKdV-I($\delta$), leKdV-II, and leKdV-II($\delta$).
These equations are associated, respectively, with the ABS equations H3(0), H3($\delta$), Q3(0), and Q3($\Delta$).
In this section, we use the elliptic eigenfunctions \eqref{eq:phi-e} and the Casoratian formula in
\eqref{psi-u-ENDT-H1} to construct explicit elliptic soliton solutions for these four equations.

\subsection{leKdV-I and H3(0)}\label{sec:3-1}

We first consider the leKdV-I equation, which is
an alternative form of the lattice mKdV (lmKdV) equation \cite{HJN-book-2016}
or H3(0) equation in the ABS list \cite{ABS-CMP-2003}.

Let us briefly recall the construction of the leKdV-I equation \cite{WZZZ-JPA-2024}.
Consider the Lax triplet \eqref{eq:laxpair-H1}
with parametrisation  \eqref{eq:ell-para},
i.e.
\begin{subequations}\label{eq:laxpair-H1-ell}
\begin{align}
&\widetilde{\widetilde{\phi}}
+(\widetilde{\widetilde{u}}-u)\widetilde{\phi}
+\wp(a)\phi=\wp(k) \phi, \label{eq:laxpair-H1-A-e}\\
&\widehat{\widehat{\phi}}
+(\widehat{\widehat{u}}-u)\widehat{\phi}
+\wp(b)\phi=\wp(k) \phi, \label{eq:laxpair-H1-B-e}\\
&\widehat{\phi}=\widetilde{\phi}-(\widehat{u}-\widetilde{u})\phi. \label{eq:laxpair-H1-C-e}
\end{align}
\end{subequations}
Let $\varphi_1$ and $\varphi_2$ be two linearly independent solutions of the above triplet.
Introduce
\begin{equation}
[ f, ~g]_{[j_1],[j_2]}^{\pm} :=
( E_{[j_1]} f )( E_{[j_2]} g )
\pm ( E_{[j_2]} f )( E_{[j_1]} g ), \qquad j_1, j_2 \in \{0, n, m\},
\end{equation}
where $E_{[0]} := \mathrm{id}$ represents the identity operator.
Applying $E_{[n]}$ to $[ \varphi_1,~\varphi_2]_{[0],[n]}^-$ and substituting
$\widetilde{\widetilde{\varphi}}_i$ from \eqref{eq:laxpair-H1-A-e}, the
term proportional to $\widetilde{\widetilde{u}}-u$ cancels identically,
leaving
\begin{equation*}
  E_{[n]}[ \varphi_1, ~ \varphi_2]_{[0],[n]}^- = (\wp(a)-\wp(k))\,[ \varphi_1, ~\varphi_2]_{[0],[n]}^-.
\end{equation*}
The same computation applied to $[ \varphi_1,\varphi_2]_{[0],[m]}^-$ and using
\eqref{eq:laxpair-H1-B-e}  yields
\begin{equation*}
  E_{[m]}[ \varphi_1,~ \varphi_2]_{[0],[m]}^- = (\wp(b)-\wp(k))\,[ \varphi_1, ~ \varphi_2]_{[0],[m]}^-.
\end{equation*}
The two relations indicate the discrete Abel formula (cf.\cite{WZZZ-JPA-2024})
\begin{equation}\label{eq:Abel-1}
[ \varphi_1,~ \varphi_2]_{[0],[n]}^-=[ \varphi_1,~ \varphi_2]_{[0],[m]}^-
=\rho(k)(\wp(a)-\wp(k))^n(\wp(b)-\wp(k))^m,
\end{equation}
where $\rho(k)$ is an `integration' constant independent of the discrete variables $ n$ and $m$.
Note that the equality of the two Casoratians in \eqref{eq:Abel-1} follows directly from the third equation \eqref{eq:laxpair-H1-C-e} of the  Lax triplet.

The discrete Abel formula \eqref{eq:Abel-1} provides a B\"acklund transformation
between the equations satisfied by $\varphi_1$ and $\varphi_2$.
Eliminating $\varphi_2$  from \eqref{eq:Abel-1}
yields the  equation for $\varphi_1$, and vice versa for $\varphi_2$.
This leads to the same equation, namely, the leKdV-I equation \cite{WZZZ-JPA-2024}
 \begin{equation}\label{eq:leKdV-I}
(\wp(a)-\wp(k)) \varphi\widehat{\varphi} -(\wp(b)-\wp(k)) \varphi \widetilde{\varphi} - \widetilde{\varphi} \widehat{\widetilde{\varphi}} +\widehat{\varphi} \widehat{\widetilde{\varphi}} = 0,
 \end{equation}
where we have used $\varphi$ to replace $\varphi_i$, $i=1,2$.
Furthermore, the leKdV-I equation \eqref{eq:leKdV-I} is converted into the lmKdV/H3(0) equation,
\begin{equation}\label{eq:H30}
s(v\widehat{v} + \widetilde{v}\widehat{\widetilde{v}}) - t(v\widetilde{v} + \widehat{v}\widehat{\widetilde{v}}) = 0,
\end{equation}
under the transformation
 \begin{equation}
\varphi = (\mathrm{i}s)^n (\mathrm{i}t)^m v,
 \end{equation}
where $\mathrm{i}^2=-1$, and
\begin{equation}\label{eq:st}
 s^2=\wp(a)-\wp(k)=\alpha^2-\mathbf{k}^2, \qquad  t^2=\wp(b)-\wp(k)=\beta^2-\mathbf{k}^2.
 \end{equation}
Here and hereafter we take the same square root branch for $s$
and same square root branch for $t$.

Obviously, using the results of Theorem \ref{thm:ellsol-H1}, one can write out
explicit forms for $\varphi_1$ and $\varphi_2$ when $u=u_{0,[e]}$ takes \eqref{eq:u0e}.
To be more specific, we introduce
\begin{subequations}\label{eq:seed-wave}
\begin{align}
\varphi_{1,[e]}(k)&=\left(\Phi_{a}(k)\right)^n
\left(\Phi_{b}(k)\right)^m
\Phi_{\xi}(-k),\\
\varphi_{2,[e]}(k)&=\left(\Phi_{a}(-k)\right)^n
\left(\Phi_{b}(-k)\right)^m
\Phi_{\xi}(k),
\end{align}
\end{subequations}
such that the elementary eigenfunction \eqref{eq:phi-e} is expressed as
\begin{equation*}
\phi_{[e]} = \rho_0^+\varphi_{1,[e]} +\rho_0^- \varphi_{2,[e]}.
\end{equation*}
The two functions in \eqref{eq:seed-wave} are functionally independent
when $\wp'(k)\neq 0$.
In fact, substituting them into the discrete Abel formula \eqref{eq:Abel-1}
yields $\rho(k)= -\wp'(k)$, which is not zero when $k$ is neither the
half-period $w$ nor $w'$, nor $w+w'$.
In the above calculation, the formulae \eqref{eq:add-5} and \eqref{eq:add-1}
and the following one,
\begin{equation*}
\Phi_{\xi}(-k)\Phi_{a}(-k)\Phi_{\xi+a}(k)-\Phi_{\xi}(k)\Phi_{a}(k)\Phi_{\xi+a}(-k)
=\frac{\sigma(2k)}{\sigma(k)^4},
\end{equation*}
are used.

For the elliptic $N$-soliton solution to the leKdV-I equation  \eqref{eq:leKdV-I} and H3(0) equation  \eqref{eq:H30},
in light of Theorem \eqref{thm:ellsol-H1},
we have the following  result.

\begin{theorem}\label{thm:leKdV-I-N}
Define
\begin{equation}\label{eq:sol-leKdV-N}
  \varphi_{j}[N](k)
  := \frac{|\widehat{N}(\boldsymbol{\varphi}_j)|_{[n]}}{|\widehat{N-1}(\boldsymbol{\varphi}_j),
  ~\boldsymbol{\sigma}|_{[n]}},
  \qquad j = 1, 2,
\end{equation}
where the composing vectors are
\begin{equation}\label{phi-ell}
\boldsymbol{\varphi}_j
=(\phi_{1,[e]}, ~\phi_{2,[e]},~\ldots,~\phi_{N,[e]},~\varphi_{j,[e]}(k))^T, \qquad j=1,2,
\end{equation}
with $\phi_{i,[e]}$ given in \eqref{2.34} for $i=1,\cdots N$,
and $\boldsymbol{\sigma}$ given in \eqref{sigma}.
Both functions $\varphi_1[N](k)$ and $\varphi_2[N](k)$  are  elliptic $N$-soliton solutions
of the leKdV-I equation \eqref{eq:leKdV-I}.
In addition,
  \begin{equation}\label{eq:sol-H30-N}
    v_j[N](k) = (\mathrm{i}s)^{-n}(\mathrm{i}t)^{-m}\,\varphi_j[N](k),
    \qquad j = 1, 2,
  \end{equation}
  provide the elliptic $N$-soliton solutions for the H3(0) equation \eqref{eq:H30}.
\end{theorem}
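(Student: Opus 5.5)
The plan is to derive the theorem from Theorem \ref{thm:ellsol-H1} together with the discrete Abel formula argument that produced \eqref{eq:leKdV-I}. First, $\varphi_{1,[e]}$ and $\varphi_{2,[e]}$ are the special cases $(\rho_0^+,\rho_0^-)=(1,0)$ and $(0,1)$ of the Lam\'e eigenfunction \eqref{eq:phi-e}. Hence each $\varphi_{j,[e]}(k)$ solves the Lax triplet \eqref{eq:laxpair-H1-ell} with $u=u_{0,[e]}$ and spectral parameter $\wp(k)$. Applying Theorem \ref{thm:H1} (equivalently Theorem \ref{thm:ellsol-H1}) with the last entry of the vector replaced by $\varphi_{j,[e]}$ shows that $\varphi_j[N](k)$ solves the full triplet \eqref{eq:laxpair-H1-ell} with the same potential $u=u[N]$. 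This holds for $j=1,2$, since the numerator of $u[N]$ does not involve the last entry. More precisely, it depends on the last entry only through $\boldsymbol\sigma$, which kills the last row.

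Second, I apply the discrete Abel argument to $\varphi_1[N]$ and $\varphi_2[N]$ with this common potential $u[N]$. Shifting $[\varphi_1[N],\varphi_2[N]]^-_{[0],[n]}$ and using \eqref{eq:laxpair-H1-A-e}, and similarly in $m$ with \eqref{eq:laxpair-H1-B-e}, gives \eqref{eq:Abel-1} with some constant $\rho_N(k)$; equation \eqref{eq:laxpair-H1-C-e} makes the two Casoratians equal. If $\rho_N(k)\neq0$, eliminating the partner function from the two Abel relations reproduces \eqref{eq:leKdV-I} for each $\varphi_j[N]$. This elimination is exactly the derivation in \cite{WZZZ-JPA-2024}, and it uses only the structure of \eqref{eq:Abel-1}, not the specific potential. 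Concretely, one writes $\widetilde\varphi_2=(\varphi_2\widetilde\varphi_1-\rho s^{2n}t^{2m})/\varphi_1$ and the analogue for the hat shift. Cross-shifting and demanding $\widehat{\widetilde\varphi}_2$ be consistent then yields a relation for $\varphi_1$ alone, namely \eqref{eq:leKdV-I}.

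The main obstacle is showing $\rho_N(k)\neq0$, that is, that $\varphi_1[N]$ and $\varphi_2[N]$ remain independent. I would use the operator form of the Darboux transformation: $\varphi_j[N]=\mathcal{D}_N\varphi_j$, where $\mathcal{D}_N$ is the composition of the one-step maps $(E_{[n]}+r_i^{(i-1)})$. The Casoratian of two images under one step picks up a factor computed from \eqref{eq:s1-r1-b}. One step maps the Casoratian $W$ to $(\mathbf{k}_i^2-\mathbf{k}^2)\,W$ up to a shift, which is nonzero for distinct spectral parameters. Iterating gives $\rho_N(k)=\prod_i(\wp(k)-\wp(k_i))\cdot(-\wp'(k))$ up to sign. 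Alternatively, I would observe that \eqref{eq:Abel-1} is only needed for the derivation and check the nondegeneracy directly.

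Finally, the H3(0) statement follows by substituting $\varphi=(\mathrm{i}s)^n(\mathrm{i}t)^m v$ into \eqref{eq:leKdV-I}, dividing by the common factor, and using \eqref{eq:st}, which gives \eqref{eq:H30}.
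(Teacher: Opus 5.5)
Your proposal is correct. Its backbone matches what the paper means by ``in light of Theorem \ref{thm:ellsol-H1}'':
\begin{itemize}
\item Theorem \ref{thm:H1} is applied with last entry $\varphi_{j,[e]}$.
\item $u[N]$ and the denominator see only $\phi_{1,[e]},\dots,\phi_{N,[e]}$, because $\boldsymbol{\sigma}$ removes the last row.
\item The Abel relations are used to eliminate the partner function.
\item The gauge $\varphi=(\mathrm{i}s)^n(\mathrm{i}t)^m v$ gives H3(0).
\end{itemize}

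The genuine difference is how you certify $\rho_N(k)\neq 0$. The paper does it by explicit evaluation in Theorem \ref{thm:leKdV-I-rho}. The Desnanot--Jacobi identity \eqref{eq:DJ} gives $[\varphi_1[N],\varphi_2[N]]^-_{[0],[n]}=G/|\widehat{N-1}(\boldsymbol{\varphi}_j),\boldsymbol{\sigma}|_{[n]}$, where $G$ is the Casoratian of order $N+2$. Both determinants are then expanded over $2^N$ sign choices and evaluated with the elliptic identity \eqref{eq:det-ident} of \cite{LZ-2024}, after which the denominator cancels.

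You instead use a one-step rule. From \eqref{eq:laxpair-H1-A} and \eqref{eq:s1-r1-b} one gets $\widetilde{\phi[1]}=\frac{\mathbf{k}_1^2-\alpha^2}{r_1}\widetilde\phi+(\mathbf{k}^2-\alpha^2)\phi$. Together with $\phi[1]=\widetilde\phi+r_1\phi$, the Casoratian of two images is multiplied by $r_1\cdot\frac{\mathbf{k}_1^2-\alpha^2}{r_1}-(\mathbf{k}^2-\alpha^2)=\mathbf{k}_1^2-\mathbf{k}^2$. This holds exactly and at the same site, so there is no ``up to a shift'' or ``up to sign''. Iterating, with Theorem \ref{thm:H1} identifying the composition with the Casoratian ratio, gives $\rho_N=\rho_0\prod_i(\wp(k_i)-\wp(k))$. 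With $\rho_0=-\wp'(k)$ this is precisely \eqref{eq:rho-N}.

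Your route is shorter, works for any seed $u_0$, and avoids the elliptic determinant identity. It does need the intermediate $\phi_i[i-1]$ to be nonvanishing, a genericity assumption of the same kind as the paper's nonvanishing denominators. The paper's computation, in exchange, gives the explicit $2^N$-term expansion of the Casoratians.

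Two smaller points. First, with the convention $[f,g]^-_{[0],[n]}=f\widetilde g-\widetilde f g$ one has $\widetilde\varphi_2=(\varphi_2\widetilde\varphi_1+\rho s^{2n}t^{2m})/\varphi_1$. Your sign is off, but this is harmless because $\rho s^{2n}t^{2m}$ divides out.

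Second, and more substantively, the step you call the main obstacle is not needed for this theorem. Take any single solution $\varphi$ of \eqref{eq:laxpair-H1-ell}. Shifting \eqref{eq:laxpair-H1-C-e} in $n$ and using \eqref{eq:laxpair-H1-A-e}, or in $m$ and using \eqref{eq:laxpair-H1-B-e}, gives
$\widehat{\widetilde\varphi}=-s^2\varphi+(u-\widehat{\widetilde u})\widetilde\varphi=-t^2\varphi+(u-\widehat{\widetilde u})\widehat\varphi$.
Multiply the first form by $\widehat\varphi$ and the second by $\widetilde\varphi$, then subtract. The $u$-terms cancel and \eqref{eq:leKdV-I} follows. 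So each $\varphi_j[N](k)$ satisfies leKdV-I as soon as it solves the transformed triplet. Nondegeneracy matters only for the Abel-elimination derivation and for the later $\delta\neq0$ constructions.
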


\subsection{Determining \texorpdfstring{$\rho(k)$}{rho(k)} in the discrete Abel formula}\label{sec:3-2}

The `integration' constant $\rho(k)$ in the discrete Abel formula \eqref{eq:Abel-1}
plays a key role in the formulation of the parameter $\delta$ in the ABS equations.
In the following we determine the value of $\rho(k)$.

\begin{theorem}\label{thm:leKdV-I-rho}
The eigenfunctions  $\varphi_1[N](k)$ and $\varphi_2[N](k)$ defined in
Theorem \ref{thm:leKdV-I-N}
satisfy the discrete Abel formula
  \begin{equation}\label{eq:Abel-2}
    [\varphi_1[N](k),~\varphi_2[N](k)]_{[0],[n]}^-
    = [\varphi_1[N](k),~\varphi_2[N](k)]_{[0],[m]}^-
    = \rho_N(k)\,s^{2n}t^{2m},
  \end{equation}
  with the `integration' constant
    \begin{equation}\label{eq:rho-N}
    \rho_N(k) = -\wp'(k)\prod_{j=1}^{N}(\wp(k_j)-\wp(k))\\
\end{equation}
\end{theorem}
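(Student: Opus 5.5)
The plan is to prove the result by induction on the number of Darboux steps, with the one-step Darboux transformation of Section \ref{sec:2-2} as the engine. First, the equality of the two Casoratians in \eqref{eq:Abel-2} and the $n,m$-dependence $s^{2n}t^{2m}$ need no new work. By Theorem \ref{thm:ellsol-H1}, $\varphi_1[N]$ and $\varphi_2[N]$ are two solutions of the triplet \eqref{eq:laxpair-H1-ell} with potential $u[N]$ and the same $k$. The argument leading to \eqref{eq:Abel-1} applies verbatim to $u[N]$, since it uses only the triplet. So only the constant $\rho_N(k)$ has to be determined. The base case $N=0$ is the computation already done: $\rho_0(k)=-\wp'(k)$.

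For the inductive step, write $\varphi_j[N]=\widetilde{\varphi_j[N-1]}+r\,\varphi_j[N-1]$, where $r=-\widetilde\psi/\psi$ and $\psi$ is the $N$-th eigenfunction after $N-1$ steps, with spectral value $\wp(k_N)$. This is the Crum iteration behind Theorem \ref{thm:H1}. Set $f_j=\varphi_j[N-1]$ and $W=[f_1,f_2]^-_{[0],[n]}$. A direct expansion gives
\[
[\,\widetilde f_1+rf_1,\ \widetilde f_2+rf_2\,]^-_{[0],[n]}
=\widetilde f_1\widetilde{\widetilde f}_2-\widetilde{\widetilde f}_1\widetilde f_2+\widetilde r\,\widetilde W\cdot(\ldots)+\ldots
\]
Substituting $\widetilde{\widetilde f}_j$ from \eqref{eq:laxpair-H1-A-e}, the $\widetilde{\widetilde u}-u$ terms cancel. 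The result collapses to
\[
\bigl(\widetilde r\,r-(\widetilde{\widetilde u}-u)r-(\wp(a)-\wp(k))\bigr)\,W
=\bigl(\wp(a)-\wp(k)-(\wp(a)-\wp(k_N))\bigr)W .
\]
Here the relation \eqref{eq:s1-r1-b}, $(\widetilde r-\widetilde{\widetilde u}+u)r=\wp(k_N)-\wp(a)$, is used. Tracking signs, this gives $[\varphi_1[N],\varphi_2[N]]^-_{[0],[n]}=(\wp(k_N)-\wp(k))\,W$ up to sign conventions. Hence $\rho_N=(\wp(k_N)-\wp(k))\rho_{N-1}$, and iterating yields \eqref{eq:rho-N}.

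The main obstacle is the bookkeeping between the one-step iteration and the Casoratian normalisation in Theorem \ref{thm:H1}. The quotient $|\widehat N(\boldsymbol\varphi_j)|/|\widehat{N-1}(\boldsymbol\varphi_j),\boldsymbol\sigma|$ must equal exactly the $N$-fold iterate $\widetilde{f}+rf$ with no extra $k$-independent factor. Otherwise $\rho_N$ would pick up the square of that factor. I would verify this from the discrete Crum theorem of \cite{ZPZ-TMP-2020}: the leading coefficient of $E_{[n]}^N\varphi$ in the Casoratian ratio is $1$, which matches the monic operator $\prod(E_{[n]}+r_i)$.

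Two further points need checking. First, the denominator is independent of the last entry, so the same $r_i$ serve both $j=1,2$. Second, the eigenfunctions $\phi_{i,[e]}$ are generic, with $\rho_{i,0}^\pm$ arbitrary, so $r$ is well defined. If the sign from the expansion comes out as $\wp(k)-\wp(k_N)$, I would recheck it against the $N=1$ case computed directly via the Casoratian.
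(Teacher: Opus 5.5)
Your proof is correct, but it takes a different route from the paper.

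\textbf{The paper's route.} The paper does not induct on the number of Darboux steps. It applies the Desnanot--Jacobi identity \eqref{eq:DJ} to the $(N+2)\times(N+2)$ Casoratian $G$ of $(\phi_{1,[e]},\dots,\phi_{N,[e]},\varphi_{1,[e]},\varphi_{2,[e]})$. This gives $[\varphi_1[N],\varphi_2[N]]^-_{[0],[n]}=G/|\widehat{N-1}(\boldsymbol{\varphi}_j),\boldsymbol{\sigma}|_{[n]}$. It then expands $G$ and the denominator into $2^N$ elementary Casoratians and evaluates each one with the elliptic determinant identity \eqref{eq:det-ident}, using parameters $(\epsilon_1k_1,\dots,\epsilon_Nk_N,k,-k)$. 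The common factor $-\wp'(k)\prod_i(\wp(k_i)-\wp(k))s^{2n}t^{2m}$ comes out of every term, and the remaining sum cancels the denominator.

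\textbf{Your route.} Your one-step recursion needs none of this elliptic machinery. It shows, for an arbitrary seed and an arbitrary pair of solutions, that each Darboux step multiplies the $n$-Casoratian by $\mathbf{k}_N^2-\mathbf{k}^2$. Hence $\rho_N=\rho_0\prod_i(\mathbf{k}_i^2-\mathbf{k}^2)$ in general, and the elliptic input is confined to the base case $\rho_0=-\wp'(k)$.

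What you must supply instead is the point you flag: the Casoratian quotient must equal the monic iterate $(E_{[n]}+r_N)\cdots(E_{[n]}+r_1)$ with no extra factor. Your justification is sufficient. Both operators are monic of order $N$ in $E_{[n]}$ and annihilate $\phi_1,\dots,\phi_N$, which have nonzero Casoratian; this is the content of Theorem \ref{thm:H1}. Your treatment of the $m$-direction equality matches the paper's final step. In exchange for using the determinant identity, the paper's route gives an explicit $2^N$-term formula for the denominator.

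\textbf{One correction.} There is no sign ambiguity to hedge over, but your intermediate display has the wrong sign on $\wp(a)-\wp(k)$. With $u=u[N-1]$ and $W=[f_1,f_2]^-_{[0],[n]}$,
\[
[\widetilde f_1+rf_1,\widetilde f_2+rf_2]^-_{[0],[n]}=\widetilde W-r(\widetilde{\widetilde u}-u)W+r\widetilde r\,W=\bigl(\wp(a)-\wp(k)+\wp(k_N)-\wp(a)\bigr)W .
\]
This uses $\widetilde W=(\wp(a)-\wp(k))W$, $f_1\widetilde{\widetilde f}_2-\widetilde{\widetilde f}_1f_2=-(\widetilde{\widetilde u}-u)W$, and $r(\widetilde r-\widetilde{\widetilde u}+u)=\wp(k_N)-\wp(a)$. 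So the factor is exactly $\wp(k_N)-\wp(k)$, as in your right-hand side.
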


\begin{proof}
For a general positive integer $N$, we introduce an auxiliary $(N+2)\times(N+2)$ determinant
\begin{equation*}
G =
  \left|
\begin{matrix}
\phi_{1,[e]} & E_{[n]}\phi_{1,[e]}& \cdots & E_{[n]}^{N} \phi_{1,[e]} & E_{[n]}^{N+1}\phi_{1,[e]} \\
\vdots & \vdots & \ddots & \vdots & \vdots \\
\phi_{N,[e]} & E_{[n]}\phi_{N,[e]} & \cdots & E_{[n]}^{N} \phi_{N,[e]} &  E_{[n]}^{N+1}\phi_{N,[e]} \\
\varphi_{1,[e]} &E_{[n]} \varphi_{1,[e]}& \cdots &  E_{[n]}^{N} \varphi_{1,[e]}  & E_{[n]}^{N+1}\varphi_{1,[e]} \\
\varphi_{2,[e]}& E_{[n]}\varphi_{2,[e]}& \cdots & E_{[n]}^{N} \varphi_{2,[e]} &  E_{[n]}^{N+1}\varphi_{2,[e]}
\end{matrix}
\right|.
\end{equation*}
A direct calculation using the Desnanot--Jacobi identity \eqref{eq:DJ} yields
 \begin{equation*}
 \begin{split}
    [\varphi_1[N](k),~\varphi_2[N](k)]_{[0],[n]}^-
   &=\frac{G_{N+2}^{N+2}G_{N+1}^1-G_{N+2}^1G_{N+1}^{N+2}}
   {|\widehat{N-1}(\boldsymbol{\varphi}_j),~\boldsymbol{\sigma}|_{[n]}\cdot E_{[n]}|\widehat{N-1}(\boldsymbol{\varphi}_j),~\boldsymbol{\sigma}|_{[n]}}\\
   &=\frac{G}
   {|\widehat{N-1}(\boldsymbol{\varphi}_j),~\boldsymbol{\sigma}|_{[n]}}.
   \end{split}
 \end{equation*}
 It remains to evaluate $G$ explicitly.

 In \cite{LZ-2024}, we proved a determinant identity
  \begin{equation}\label{eq:det-ident}
\begin{split}
&\left|\Phi_{\xi}(-\mathbf{k}) ,~ \Phi_{\xi+a}(-\mathbf{k})\Phi_{a}(\mathbf{k}),~
\cdots ,\Phi_{\xi+(N-1)a}(-\mathbf{k})\Big(\Phi_{a}(\mathbf{k})\Big)^{N-1}  \right| \\
=\,&(-1)^N\frac{\sigma(\xi-\sum_{i=1}^Nk_i)}{\sigma(\xi)}\cdot
\frac{\prod_{1\leq i<j\leq N}\sigma(k_i-k_j)}{\sigma^{N}(k_1)\cdots\sigma^{N}(k_N)},
\end{split}
\end{equation}
where $f(\mathbf{k})$ denotes a column vector $(f(k_1), f(k_2), \cdots, f(k_N))^T$.
Applying \eqref{eq:det-ident} with $N $ replaced by $ N+2$ and the ordered parameters $(\epsilon_1 k_1,\cdots\epsilon_N k_N, k, -k)$,
we  immediately obtain
\begin{equation*}
\begin{split}
 & \left|
\begin{matrix}
\Phi_{\xi}(-\epsilon_1 k_1)& \Phi_{\xi+a}(-\epsilon_1 k_1)\Phi_{a}(\epsilon_1 k_1)& \cdots & \Phi_{\xi+(N+1)a}(-\epsilon_1 k_1)\left(\Phi_{a}(\epsilon_1 k_1)\right)^{N+1} \\
\vdots & \vdots & \ddots & \vdots \\
\Phi_{\xi}(-\epsilon_N k_N)& \Phi_{\xi+a}(-\epsilon_N k_N)\Phi_{a}(\epsilon_N k_N)& \cdots & \Phi_{\xi+(N+1)a}(-\epsilon_N k_N)\left(\Phi_{a}(\epsilon_N k_N)\right)^{N+1}  \\
\Phi_{\xi}(- k) & \Phi_{\xi+a}(- k)\Phi_{a}( k)& \cdots & \Phi_{\xi+(N+1)a}(- k)\left(\Phi_{a}( k)\right)^{N+1}  \\
\Phi_{\xi}(k)& \Phi_{\xi+a}(k)\Phi_{a}( -k)& \cdots & \Phi_{\xi+(N+1)a}( k)\left(\Phi_{a}(-k)\right)^{N+1}  \\
\end{matrix}
\right|\\
=&V_{[e]}\left(\epsilon_1 k_1,\cdots \epsilon_N k_N\right)\cdot
\prod_{i=1}^N\frac{\sigma(\epsilon_i k_i-k)\sigma(\epsilon_i k_i+k)}{\sigma^2(\epsilon_i k_i)\sigma(k)\sigma(-k)}\cdot
\frac{\sigma(2k)}{\sigma^4(k)}\\
=&V_{[e]}\left(\epsilon_1 k_1,\cdots \epsilon_N k_N\right)\cdot
\prod_{i=1}^N\left(\wp(k_i)-\wp(k)\right)\cdot
(-\wp'(k)),
\end{split}
\end{equation*}
where
\begin{equation*}
V_{[e]}\left(\epsilon_1 k_1,\cdots \epsilon_N k_N\right)=(-1)^N\frac{\sigma(\xi-\sum_{i=1}^N \epsilon_i k_i)}{\sigma(\xi)}\cdot
\frac{\prod_{1\leq i<j\leq N}\sigma(\epsilon_i k_i- \epsilon_j k_j)}{\sigma^{N}(\epsilon_1 k_1)\cdots\sigma^{N}(\epsilon_N k_N)}.
\end{equation*}
Since each $\phi_{i,[e]}$ is a sum of two elementary functions, the determinants $|\widehat{N-1}(\boldsymbol{\varphi}_j),~\boldsymbol{\sigma}|_{[n]}$ and  $G$  can be expanded into $2^N$ Casoratians.
Define
\begin{equation*}
  k_{i,+1}=\rho_{i,0}^{+}\left(\Phi_a(k_i)\right)^n\left(\Phi_b(k_i)\right)^m, \qquad
   k_{i,-1}=\rho_{i,0}^{-}\left(\Phi_a(-k_i)\right)^n\left(\Phi_b(-k_i)\right)^m.
\end{equation*}
Then we have
\begin{align}
&|\widehat{N-1}(\boldsymbol{\varphi}_j),~\boldsymbol{\sigma}|_{[n]}=\sum_{\boldsymbol{\epsilon}\in\{1,-1\}^{N}}
\left(\prod_{i=1}^N k_{i,\epsilon_i}\right)
V_{[e]}\left(\epsilon_1 k_1,\cdots \epsilon_N k_N\right),\\
&G=- \wp'(k) \prod_{i=1}^N (\wp(k_i)-\wp(k)) s^{2n}t^{2m}\left(\sum_{\boldsymbol{\epsilon}\in\{1,-1\}^{N}}
\left(\prod_{i=1}^N k_{i,\epsilon_i}\right)
V_{[e]}\left(\epsilon_1 k_1,\cdots \epsilon_N k_N\right)\right),
\end{align}
where $\boldsymbol{\epsilon}=(\epsilon_1,\cdots,\epsilon_N)$.
 Substituting these identities into the discrete Abel formula,
  the denominator cancels exactly, and we obtain
\begin{equation*}
  [\varphi_1[N](k),~\varphi_2[N](k)]_{[0],[n]}^-
   =\rho_N(k)s^{2n}t^{2m},
\end{equation*}
where
\begin{equation*}
 \rho_N(k)=- \wp'(k) \prod_{i=1}^N (\wp(k_i)-\wp(k)).
\end{equation*}
Finally, the third equation of the transformed Lax triplet, namely, \eqref{eq:laxpair-H1-C-e} with $u$ replaced
by $u[N]$ yields
\[
[\varphi_1[N](k),\varphi_2[N](k)]^-_{[0],[m]}
=
[\varphi_1[N](k),\varphi_2[N](k)]^-_{[0],[n]}
=
\rho_N(k)s^{2n}t^{2m},
\]
which establishes the $m$-direction equality in \eqref{eq:Abel-2}.
This completes the proof.
\end{proof}

\subsection{leKdV-I(\texorpdfstring{$\delta$}{delta}) and H3(\texorpdfstring{$\delta$}{delta})}
\label{sec:3-3}

We next consider the $\delta$-extension, namely leKdV-I($\delta$) equation \cite{WZZZ-JPA-2024},
\begin{equation}\label{eq:leKdV-Id}
s^{2}\psi\widehat{\psi}-t^{2}\psi\widetilde{\psi}
-\widetilde{\psi}\widehat{\widetilde{\psi}}+\widehat{\psi}\widehat{\widetilde{\psi}}
=(-1)^{n+m}\delta\left(s^{2}-t^{2}\right)s^{2n}t^{2m},
\end{equation}
where $s$ and $t$ are defined by \eqref{eq:st}. Under the transformation
\begin{equation}\label{eq:IITOH3}
\psi=(\mathrm{i}s)^n(\mathrm{i}t)^m v,
\end{equation}
the  leKdV-I($\delta$) equation \eqref{eq:leKdV-Id} is transformed into
\begin{equation}\label{eq:H3d}
s(v\widehat{v}+\widetilde{v}\widehat{\widetilde{v}})
-t(v\widetilde{v}+\widehat{v}\widehat{\widetilde{v}})
=-\mathrm{i}\delta\left(\frac{s}{t}-\frac{t}{s}\right),
\end{equation}
which coincides with the H3($\delta$) equation in the ABS classification.

Let us recall how equation \eqref{eq:leKdV-Id} is constructed (cf. \cite{WZZZ-JPA-2024}).
In addition to the Lax triplet \eqref{eq:laxpair-H1-ell}, we introduce its `adjoint' form
\begin{subequations}\label{eq:laxpair-H1-ella}
\begin{align}
&\widetilde{\widetilde{\phi}}{}^*
-(\widetilde{\widetilde{u}}-u)\widetilde{\phi}^*
+\wp(a)\phi^* =\wp(k) \phi^*, \label{eq:laxpair-H1-A-ea}\\
&\widehat{\widehat{\phi}}{}^*
-(\widehat{\widehat{u}}-u)\widehat{\phi}^*
+\wp(b)\phi^* =\wp(k) \phi^*, \label{eq:laxpair-H1-B-ea}\\
&\widehat{\phi}^*=\widetilde{\phi}^*+(\widehat{u}-\widetilde{u})\phi^*. \label{eq:laxpair-H1-C-ea}
\end{align}
\end{subequations}
where $\phi^*$ stands for an `adjoint' counterpart  of $\phi$.
Apparently, for any solution $\phi$ of \eqref{eq:laxpair-H1-ell},
$\phi^*=(-1)^{n+m}\phi$ is a solution of \eqref{eq:laxpair-H1-ella}, and vice versa.
In other words, for given $u$ and $k$, for any independent solutions of \eqref{eq:laxpair-H1-ell}, namely,
$\varphi_1$ and $\varphi_2$, then,
\begin{equation}\label{phi*}
\phi^{*}=(-1)^{n+m}(C\varphi_{1}+D\varphi_{2}),~~ C,D\in \mathbb{C}
\end{equation}
provides a general solution to \eqref{eq:laxpair-H1-ella}.
On the one hand, similar to \eqref{eq:Abel-1},
for any $\phi$ and $\phi^*$ satisfying \eqref{eq:laxpair-H1-ell} and \eqref{eq:laxpair-H1-ella} with the same
$u$ and $k$, one can prove that
\begin{equation}\label{eq:Abel-3}
[\phi, ~\phi^{*}]_{[0],[n]}^+
=[\phi,~\phi^{*}]_{[0],[m]}^+
=(-1)^{n+m}	\varrho(k) s^{2n}t^{2m},
\end{equation}
where $\varrho(k)$ is the constant due to `integration'.
On the other hand, for any independent solutions $\varphi_1$ and $\varphi_2$
of \eqref{eq:laxpair-H1-ell}, considering
\begin{equation}\label{phi}
\phi=A \varphi_{1}+B\varphi_{2},~~ A,B\in \mathbb{C}
\end{equation}
together with \eqref{phi*}, one can prove that
\begin{equation}\label{eq:Abel-4}
[\phi,~\phi^{*}]_{[0],[n]}^+
=-(-1)^{n+m}(AD-BC)[\varphi_1,~\varphi_2]_{[0],[n]}^-,
\end{equation}
where we have assumed $AD-BC\neq 0$.
Recalling the Abel formula \eqref{eq:Abel-1} we immediately find
\[ \varrho(k)=-\rho(k)(AD-BC).\]

Next we introduce functions
\begin{subequations}
\begin{align}
&\psi=\phi+\phi^*=\left(A+(-1)^{n+m}C\right)\varphi_{1}
+\left(B+(-1)^{n+m}D\right)\varphi_{2},\\
&\psi^{*}=\phi-\phi^*=\left(A-(-1)^{n+m}C\right)\varphi_{1}
+\left(B-(-1)^{n+m}D\right)\varphi_{2},
\end{align}
\end{subequations}
where $\phi$ and $\phi^*$ are defined in \eqref{phi} and \eqref{phi*}, as we assume $AD-BC\neq 0$.
Then, using the discrete Abel formula \eqref{eq:Abel-3}, we obtain
\begin{equation}\label{3.27}
\psi\widetilde{\psi} - \psi^* \widetilde{\psi}^* = \psi \widehat{\psi} - \psi^* \widehat{\psi}^*
= \delta (-1)^{n+m} s^{2n} t^{2m},
\end{equation}
where
 \begin{equation}\label{eq:I-delta}
\delta
=2\varrho(k).
\end{equation}
The above equation \eqref{3.27} serves as a pair of B\"acklund transformations.
Eliminating $\psi^{*}$ shows that $\psi$ satisfies the leKdV-I($\delta$) equation \eqref{eq:leKdV-Id};
likewise, $\psi^{*}$ satisfies the leKdV-I($-\delta$) equation.

In the simplest case, for the elementary eigenfunctions $\varphi_{1,[e]}$ and $\varphi_{2,[e]}$
defined in \eqref{eq:seed-wave}
we introduce
\begin{equation}
\phi_{[e]}=A\varphi_{1,[e]}+B\varphi_{2,[e]},\qquad
\phi_{[e]}^{*}=(-1)^{n+m}(C\varphi_{1,[e]}+D\varphi_{2,[e]}),
\end{equation}
which are the solutions to the triplets \eqref{eq:laxpair-H1-ell} and  \eqref{eq:laxpair-H1-ella},
respectively, where $u=u_{0,[e]}$ taken as in \eqref{eq:u0e}.
Then,
\begin{align}
\psi_{[e]}&=\phi_{[e]}+\phi^*_{[e]}=\left(A+(-1)^{n+m}C\right)\varphi_{1,[e]}(k)
+\left(B+(-1)^{n+m}D\right)\varphi_{2,[e]}(k),\\
\psi_{[e]}^{*}&=\phi_{[e]}-\phi^*_{[e]}=\left(A-(-1)^{n+m}C\right)\varphi_{1,[e]}(k)
+\left(B-(-1)^{n+m}D\right)\varphi_{2,[e]}(k),
\end{align}
respectively provide plane wave solutions for the leKdV-I($\delta$) equation \eqref{eq:leKdV-Id}
and the leKdV-I(-$\delta$) equation, where
\begin{equation}\label{eq:I-delta-e}
\delta
=2\wp'(k)(AD-BC).
\end{equation}

Now, taking $\varphi_{1}[N](k)$ and $\varphi_{2}[N](k)$ to be the two solutions
in \eqref{eq:sol-leKdV-N} in Theorem \ref{thm:leKdV-I-N},
and making use of the result in Sec. \ref{sec:3-2},
it is easy to arrive at the Casoratian expression for the elliptic solitons
of the leKdV-I($\delta$) equation as well as the H3($\delta$) equation.

\begin{theorem}\label{thm:ellsol-H3d}
The elliptic $N$-soliton solution of the leKdV-I($\delta$) equation \eqref{eq:leKdV-Id} can be given as
\begin{equation}\label{eq:leKdV-I-psi}
\psi[N]
=
\frac{\left(A+(-1)^{n+m}C\right)\varphi_{1}[N](k)
+
\left(B+(-1)^{n+m}D\right)\varphi_{2}[N](k)}{\mathcal N_N(k)},
\end{equation}
where $\varphi_{j}[N](k)$, $j=1,2$ are given by \eqref{eq:sol-leKdV-N},
\begin{equation}\label{eq:factor}
\mathcal N_N(k)=\prod_{j=1}^{N}(\wp(k_j)-\wp(k))^{\frac12}
\end{equation}
is a normalization factor chosen so that $\delta$ can be expressed as \eqref{eq:I-delta-e}.
Under the  transformation \eqref{eq:IITOH3}, the elliptic $N$-soliton solution of the H3($\delta$) equation \eqref{eq:H3d} is
\begin{equation}
v[N]=(\mathrm{i}s)^{-n}(\mathrm{i}t)^{-m} \psi[N].
\end{equation}
\end{theorem}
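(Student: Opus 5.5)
The plan is to reduce everything to the general $\delta$-construction of Sec.~\ref{sec:3-3}, applied with the background potential $u=u[N]$ from Theorem~\ref{thm:ellsol-H1} instead of $u_{0,[e]}$. By Theorem~\ref{thm:leKdV-I-N}, $\varphi_1[N](k)$ and $\varphi_2[N](k)$ are two solutions of the Lax triplet \eqref{eq:laxpair-H1-ell} with $u=u[N]$. Theorem~\ref{thm:leKdV-I-rho} shows they are linearly independent, since their discrete Abel constant is $\rho_N(k)\neq0$ for generic $k$. Since the triplet is linear in $\phi$, the normalized functions $\hat\varphi_j:=\varphi_j[N](k)/\mathcal N_N(k)$ are also solutions. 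Their Abel constant rescales by $\mathcal N_N(k)^{-2}$:
\[
[\hat\varphi_1,\hat\varphi_2]^-_{[0],[n]}=[\hat\varphi_1,\hat\varphi_2]^-_{[0],[m]}=\frac{\rho_N(k)}{\prod_{j=1}^N(\wp(k_j)-\wp(k))}\,s^{2n}t^{2m}=-\wp'(k)\,s^{2n}t^{2m}.
\]
So the normalized pair has exactly the Abel constant $\rho(k)=-\wp'(k)$ of the elementary pair \eqref{eq:seed-wave}. This is the purpose of the factor \eqref{eq:factor}.

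Next I run the construction of Sec.~\ref{sec:3-3} verbatim with $\varphi_1,\varphi_2$ replaced by $\hat\varphi_1,\hat\varphi_2$:
\begin{itemize}
\item Set $\phi=A\hat\varphi_1+B\hat\varphi_2$ and $\phi^*=(-1)^{n+m}(C\hat\varphi_1+D\hat\varphi_2)$. Then $\phi^*$ solves the adjoint triplet \eqref{eq:laxpair-H1-ella} with $u=u[N]$, because that triplet is obtained from \eqref{eq:laxpair-H1-ell} by the sign change $(-1)^{n+m}$.
\item Check \eqref{eq:Abel-4} by expanding the bilinear brackets. The terms $\hat\varphi_i\widetilde{\hat\varphi}_i$ cancel because of the alternating sign, leaving $-(-1)^{n+m}(AD-BC)[\hat\varphi_1,\hat\varphi_2]^-_{[0],[n]}$, and similarly in the $m$-direction.
\item This gives $\varrho(k)=-(AD-BC)\rho(k)=\wp'(k)(AD-BC)$.
\item With $\psi=\phi+\phi^*$ and $\psi^*=\phi-\phi^*$, the identity $\psi\widetilde\psi-\psi^*\widetilde\psi^*=2[\phi,\phi^*]^+_{[0],[n]}$ and its hat analogue give \eqref{3.27} with $\delta=2\varrho(k)$, which agrees with \eqref{eq:I-delta-e}.
\end{itemize}
The resulting $\psi$ is precisely \eqref{eq:leKdV-I-psi}.

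To finish, I eliminate $\psi^*$ from the pair \eqref{3.27}. I form $\widehat{(\cdot)}$ and $\widetilde{(\cdot)}$ shifts and use the fact that $\psi^*\widetilde\psi^*$ and $\psi^*\widehat\psi^*$ are known in terms of $\psi$. The cleanest route is to invoke the result of \cite{WZZZ-JPA-2024}: any $\psi$ admitting a $\psi^*$ with \eqref{3.27} satisfies \eqref{eq:leKdV-Id}. That result is purely algebraic and independent of the specific $u$. The statement for H3($\delta$) then follows by substituting \eqref{eq:IITOH3} and dividing \eqref{eq:leKdV-Id} by $(-1)^{n+m}s^{2n+1}t^{2m+1}$, which turns the right-hand side into $-\mathrm{i}\delta(s/t-t/s)$ after the powers of $\mathrm{i}$ are tracked.

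The main obstacle is the dependence on Theorem~\ref{thm:leKdV-I-rho}, in particular that the $m$-direction bracket equals the $n$-direction one for $u[N]$. Here I rely on the transformed triplet \eqref{eq:laxpair-H1-C-e} holding for $u[N]$, which Theorem~\ref{thm:H1} guarantees. A minor subtlety is the branch of the square root in $\mathcal N_N(k)$. Any fixed branch works, because only $\mathcal N_N(k)^2$ enters the Abel constant, and a sign flip can be absorbed into $A,B,C,D$ without changing $AD-BC$.
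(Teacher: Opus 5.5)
Your proposal is correct and takes the same approach as the paper's brief argument. You apply the construction of Sec.~\ref{sec:3-3} to the Darboux-iterated pair $\varphi_j[N](k)$ with $u=u[N]$, take $\rho_N(k)$ from Theorem~\ref{thm:leKdV-I-rho}, and divide by $\mathcal N_N(k)$ so that the Abel constant returns to $-\wp'(k)$ and $\delta=2\wp'(k)(AD-BC)$, consistent with the paper's remark that without normalization $\delta_N=-2\rho_N(k)(AD-BC)$; your explicit elimination of $\psi^*$ and the tracking of powers of $\mathrm{i}$ for H3($\delta$) fill in steps the paper leaves implicit.
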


\begin{remark}
Without the normalization factor $\mathcal N_N(k)$ in \eqref{eq:leKdV-I-psi}, the expression
\begin{equation*}
\Psi[N]
=\left(A+(-1)^{n+m}C\right)\varphi_{1}[N](k)
+ \left(B+(-1)^{n+m}D\right)\varphi_{2}[N](k)
\end{equation*}
satisfies the leKdV-I($\delta$) equation \eqref{eq:leKdV-Id} with $\delta$ replaced by
  \[
\delta_N=-2\rho_N(k)(AD-BC),
\]
where $\rho_N(k)$ is given in \eqref{eq:rho-N}.
\end{remark}

\subsection{leKdV-II(\texorpdfstring{$\delta$}{delta}) and Q3(\texorpdfstring{$\Delta$}{Delta})}
\label{sec:3-4}

We now turn to the second class of lattice eigenfunction equations,
the leKdV-II($\delta$) equation, which reads \cite{WZZZ-JPA-2024},
\begin{equation}\label{eq:II-delta}
s^{2}\mu^{2}Y\widehat{Y}
+\widetilde{Y}\widehat{\widetilde{Y}}
-t^{2}\nu^{2}Y\widetilde{Y}
-\widehat{Y}\widehat{\widetilde{Y}}
=
(\wp(a)-\wp(b))
\left(
Y\widehat{\widetilde{Y}}
+\widetilde{Y}\widehat{Y}
+\delta(\wp(k)-\wp(q))^2
s^{2n}\mu^{2n}
t^{2m}\nu^{2m}
\right),
\end{equation}
where $\delta$ is an arbitrary constant, $s^2$ and $t^2$ are given by \eqref{eq:st},
\begin{equation}\label{eq:munu}
\mu^{2}=\wp(a)-\wp(q)=\alpha^2-\mathbf{q}^2,
~~
\nu^{2}=\wp(b)-\wp(q)=\beta^2-\mathbf{q}^2,
~~
\mathrm{\mathbf{q}}^2=\wp(q)-e, ~~ q\in \mathbb{C}.
\end{equation}
The  case $\delta=0$ is referred to as the leKdV-II equation \cite{WZZZ-JPA-2024},
\begin{equation}\label{eq:leKdV-II}
s^{2}\mu^{2}y\widehat{y}
+\widetilde{y}\widehat{\widetilde{y}}
-t^{2}\nu^{2}y\widetilde{y}
-\widehat{y}\widehat{\widetilde{y}}
-(\wp(a)-\wp(b))
(\widetilde{y}\widehat{y}
+y\widehat{\widetilde{y}})
=0.
\end{equation}
We first recall how $Y$ and $y$ are formulated via eigenfunctions.
In addition to the triplet \eqref{eq:laxpair-H1-ell}, we consider a same one but replace
the spectral parameter $\mathbf{k}^2=\wp(k)-e$ by $\mathbf{q}^2=\wp(q)-e$,
i.e.
\begin{subequations}\label{eq:laxpair-H1-ellq}
\begin{align}
&\widetilde{\widetilde{\phi}}
+(\widetilde{\widetilde{u}}-u)\widetilde{\phi}
+\wp(a)\phi=\wp(q) \phi, \label{eq:laxpair-H1-A-eq}\\
&\widehat{\widehat{\phi}}
+(\widehat{\widehat{u}}-u)\widehat{\phi}
+\wp(b)\phi=\wp(q) \phi, \label{eq:laxpair-H1-B-eq}\\
&\widehat{\phi}=\widetilde{\phi}-(\widehat{u}-\widetilde{u})\phi. \label{eq:laxpair-H1-C-eq}
\end{align}
\end{subequations}
Same as the discrete Abel formula \eqref{eq:Abel-1}, here we have
\begin{equation}\label{eq:Abel-5}
[ \varphi_{1}(q), ~\varphi_{2}(q)]_{[0],[n]}^-=[ \varphi_{1}(q),~  \varphi_{2}(q)]_{[0],[m]}^-
=\rho(q)\mu^{2n}\nu^{2m},
\end{equation}
where $\varphi_{1}(q)$ and $\varphi_{2}(q)$ are two independent solutions of \eqref{eq:laxpair-H1-ellq}.
Introduce
\begin{align}
&\langle \boldsymbol{\varphi}(k)|=
(
\varphi_{1}(k),~\varphi_{2}(k)
),
\quad
|\boldsymbol{\varphi}(q)\rangle=
\begin{pmatrix}
\varphi_{1}(q)\\
\varphi_{2}(q)
\end{pmatrix},
\quad
M=
\begin{pmatrix}
A&B\\
C&D
\end{pmatrix},
\end{align}
where $A,B,C,D$ are arbitrary constants, and define
\begin{equation}\label{eq:Y}
Y=\langle\boldsymbol{\varphi}(k)|~M~|\widetilde{\boldsymbol{\varphi}}(q)\rangle
     -\langle\widetilde{\boldsymbol{\varphi}}(k)|~M~|\boldsymbol{\varphi}(q)\rangle.
\end{equation}
It has been shown in \cite{WZZZ-JPA-2024} that $Y$ solves the leKdV-II($\delta$) equation \eqref{eq:II-delta} with
  \begin{equation}\label{eq:II-deltavalue-g}
    \delta = \rho(k)\rho(q)(AD-BC).
  \end{equation}
In particular, $y=Y|_{|M|=0}$ yields a solution to the  leKdV-II equation \eqref{eq:leKdV-II}.

In the simplest case, taking
\begin{subequations}\label{Y}
\begin{equation}\label{eq:Y-Lame}
Y=\langle\boldsymbol{\varphi}_{[e]}(k)|~M~|\widetilde{\boldsymbol{\varphi}}_{[e]}(q)\rangle
     -\langle\widetilde{\boldsymbol{\varphi}}_{[e]}(k)|~M~|\boldsymbol{\varphi}_{[e]}(q)\rangle,
\end{equation}
where
 \begin{align}
&\langle \boldsymbol{\varphi}_{[e]}(k)|=
(
\varphi_{1,[e]}(k),~\varphi_{2,[e]}(k)
),
\quad
|\boldsymbol{\varphi}_{[e]}(q)\rangle=
\begin{pmatrix}
\varphi_{1,[e]}(q)\\
\varphi_{2,[e]}(q)
\end{pmatrix},
\end{align}
\end{subequations}
in which $\varphi_{1,[e]}(z)$ and $\varphi_{2,[e]}(z)$ are defined as in \eqref{eq:seed-wave},
we have the plane wave solution for the  leKdV-II($\delta$) equation \eqref{eq:II-delta} with
\begin{equation}\label{eq:II-deltavalue}
    \delta = \wp'(k)\wp'(q)(AD-BC).
\end{equation}
Note that the above $Y$ can be written as
\begin{equation}\label{eq:Yyij-e}
    Y=Ay_{11,[e]}+By_{12,[e]}+Cy_{21,[e]}+Dy_{22,[e]},
\end{equation}
where
\begin{equation}\label{eq:yij-e}
    y_{ij,[e]}=[\varphi_{i,[e]}(k),~\varphi_{j,[e]}(q)]^{-}_{[0],[n]},
    \qquad i,j\in\{1,2\}.
\end{equation}
For the  leKdV-II  equation \eqref{eq:leKdV-II}, its plane wave solution is
given by $y=Y|_{|M|=0}$.

In the general case, we can write \eqref{eq:Y} into the form
\begin{equation}\label{eq:Yyij}
    Y=Ay_{11}+By_{12}+Cy_{21}+Dy_{22},
\end{equation}
where
\begin{equation}\label{eq:yij}
    y_{ij}=[\varphi_{i}(k),~\varphi_{j}(q)]^{-}_{[0],[n]},
    \qquad i,j\in\{1,2\}.
\end{equation}
After replacing the eigenfunctions  $\varphi_{i}(k)$ and $\varphi_{j}(q)$ by their Darboux iterates $\varphi_i[N](k)$ and $\varphi_j[N](q)$, respectively, for $i,j=1,2$, we present  the
 elliptic $N$-soliton solutions in the following.

\begin{theorem}
Define
\begin{equation}\label{eq:yijN}
y_{ij}[N]=[\varphi_{i}[N](k),~\varphi_{j}[N](q)]_{[0],[n]}^-
\qquad
i,j\in\{1,2\},
\end{equation}
where $\varphi_{i}[N](z)$ is defined as in \eqref{eq:sol-leKdV-N}.
Then the leKdV-II($\delta$) equation
\eqref{eq:II-delta} admits the elliptic $N$-soliton solution
\begin{equation}\label{eq:YN}
Y[N]=
\frac{A y_{11}[N]+B y_{12}[N]+C y_{21}[N]+D y_{22}[N]}
{\mathcal N_N(k)\mathcal N_N(q)}
\end{equation}
where  $\mathcal N_N(z)$ is defined as in \eqref{eq:factor} and $\delta$ is given in \eqref{eq:II-deltavalue}.
In particular, the explicit solution of the leKdV-II equation \eqref{eq:leKdV-II} is recovered by setting $AD-BC=0$.
\end{theorem}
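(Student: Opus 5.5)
The plan is to reduce the statement to the general result of \cite{WZZZ-JPA-2024} recalled above: for \emph{any} potential $u$ and any two pairs of independent solutions $\varphi_{1,2}(k)$, $\varphi_{1,2}(q)$ of the triplets \eqref{eq:laxpair-H1-ell} and \eqref{eq:laxpair-H1-ellq} with this same $u$, the function $Y$ in \eqref{eq:Y} solves the leKdV-II($\delta$) equation \eqref{eq:II-delta} with $\delta=\rho(k)\rho(q)(AD-BC)$. Here $\rho(k)$ and $\rho(q)$ are the constants in the Abel formulae \eqref{eq:Abel-1} and \eqref{eq:Abel-5}. Two ingredients are needed. First, the Darboux iterates $\varphi_j[N](k)$ and $\varphi_j[N](q)$ must be eigenfunctions for one common potential. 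Second, their Abel constants must be identified and then normalized.

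For the first ingredient, note that the denominators $|\widehat{N-1}(\boldsymbol{\varphi}_j),\boldsymbol{\sigma}|_{[n]}$ and the potential $u[N]$ in Theorem~\ref{thm:ellsol-H1} depend only on $\phi_{1,[e]},\dots,\phi_{N,[e]}$. Expanding along the last column $\boldsymbol{\sigma}$ shows the last row drops out, so the denominator does not depend on the spectral parameter or on $j$. Hence Theorem~\ref{thm:ellsol-H1}, applied once with $\mathbf{k}^2=\wp(k)-e$ and once with $\mathbf{q}^2=\wp(q)-e$, gives that $\varphi_1[N](k),\varphi_2[N](k)$ solve \eqref{eq:laxpair-H1-ell} and $\varphi_1[N](q),\varphi_2[N](q)$ solve \eqref{eq:laxpair-H1-ellq}, all with the same potential $u[N]$. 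Since the Darboux map $\phi\mapsto\phi[N]$ is linear, the unnormalized combination $A y_{11}[N]+\dots+D y_{22}[N]$ is precisely $Y$ of \eqref{eq:Y} built from these iterates. The cited result then shows it solves \eqref{eq:II-delta} with $\delta_N=\rho_N(k)\rho_N(q)(AD-BC)$.

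For the second ingredient, Theorem~\ref{thm:leKdV-I-rho} gives $\rho_N(k)=-\wp'(k)\prod_j(\wp(k_j)-\wp(k))$, and the same theorem with $k\to q$ gives the analogous $\rho_N(q)$. In both cases the $n,m$-dependence is exactly $s^{2n}t^{2m}$, respectively $\mu^{2n}\nu^{2m}$, as required in \eqref{eq:Abel-1} and \eqref{eq:Abel-5}. Now equation \eqref{eq:II-delta} is quadratic in $Y$ with an inhomogeneous term linear in $\delta$. So if $Y$ solves it with parameter $\delta_N$, then $Y/c$ for a constant $c$ solves it with $\delta_N/c^2$. Taking $c=\mathcal N_N(k)\mathcal N_N(q)$ gives $c^2=\prod_j(\wp(k_j)-\wp(k))(\wp(k_j)-\wp(q))$, hence
\[
\frac{\delta_N}{c^2}=\wp'(k)\wp'(q)(AD-BC),
\]
which is \eqref{eq:II-deltavalue}; the two minus signs cancel. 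Setting $AD-BC=0$ makes $\delta=0$ and recovers a solution of \eqref{eq:leKdV-II}.

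The main obstacle is the first ingredient. One must check carefully that the iterates at $k$ and at $q$ share the same $u[N]$, including the same shift convention $E^N_{[n]}u_{0,[e]}$, and that they satisfy the third equation \eqref{eq:laxpair-H1-C-e} for $u[N]$. The latter follows from the full Darboux transformation in Theorem~\ref{thm:H1}. One must also verify the independence condition $\rho_N\neq0$, which holds for generic $k,q$ that are not half-periods and not equal to $\pm k_j$. Beyond that, no new computation is needed.
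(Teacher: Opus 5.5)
Your proposal is correct and follows the same route as the paper. The Darboux iterates at $k$ and at $q$ share the potential $u[N]$, since the $\boldsymbol{\sigma}$-column removes the last row, so the general result behind \eqref{eq:II-deltavalue-g} applies with $\delta_N=\rho_N(k)\rho_N(q)(AD-BC)$; dividing by $\mathcal N_N(k)\mathcal N_N(q)$ then rescales this, via Theorem~\ref{thm:leKdV-I-rho}, to $\wp'(k)\wp'(q)(AD-BC)$. The paper states the theorem without writing these steps out, so your argument fills in what it leaves implicit.
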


The leKdV-II($\delta$) equation \eqref{eq:II-delta} is related to the Q3($\Delta$) equation by a transformation introduced in \cite{WZZZ-JPA-2024}.
 The explicit elliptic solutions obtained above therefore give solutions of the Q3($\Delta$) equation.

\begin{theorem}\label{thm:Q3}
The Q3($\Delta$) equation reads \cite{HJN-book-2016}
\begin{equation}\label{eq:Q3delta}
  P(X\widehat{X}+\widetilde{X}\widehat{\widetilde{X}})
  -Q(X\widetilde{X}+\widehat{X}\widehat{\widetilde{X}}) =  (\alpha^2-\beta^2)\left(\widetilde{X}\widehat{X}+X\widehat{\widetilde{X}}+\frac{\Delta}{PQ} \right)
\end{equation}
where
$P=s\mu$, $ Q=t\nu$,
\begin{equation}\label{eq:Q3-deltavalue}
\Delta=(\wp(k)-\wp(q))^2\delta,
\end{equation}
$\alpha^2$ and $\beta^2$ are given by
\eqref{eq:ell-para}, and $\delta$ takes the form  \eqref{eq:II-deltavalue}.
With this parametrization,
$(\alpha,P),(\beta,Q)$  lie on the Jacobi quartic $\Gamma_{\mathbf k,\mathbf q}$
defined in \eqref{eq:Q3-ell-curve}.
The $Q3(\Delta)$ equation \eqref{eq:Q3delta} admits the following explicit solutions.

(i) The plane wave solution is given by
\begin{equation}\label{eq:X-Y}
X=s^{-n}t^{-m}\mu^{-n}\nu^{-m}Y,
\end{equation}
where the function $Y$ is given by \eqref{Y}.

(ii) The elliptic $N$-soliton solution in terms of Casoratian is
\begin{equation}
X[N]
=s^{-n}t^{-m}\mu^{-n}\nu^{-m}Y[N],
\end{equation}
where $Y[N]$ is defined by \eqref{eq:YN}.

In particular, when $AD-BC=0$, the above solutions reduce to the corresponding solutions of the Q3$(0)$ equation.
\end{theorem}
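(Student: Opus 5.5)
The plan is to reduce Theorem \ref{thm:Q3} to the already established leKdV-II($\delta$) results by a direct gauge substitution, so the only genuine content is an algebraic check that \eqref{eq:X-Y} maps \eqref{eq:II-delta} to \eqref{eq:Q3delta}. First I substitute $Y=P^nQ^mX$ with $P=s\mu$, $Q=t\nu$, i.e. $Y=s^n\mu^nt^m\nu^mX$, into \eqref{eq:II-delta}. Each bilinear term picks up a common factor: $Y\widehat Y$ and $\widetilde Y\widehat{\widetilde Y}$ carry $P^{2n}Q^{2m+1}$ and $P^{2n+2}Q^{2m+1}$ respectively, while $Y\widetilde Y$ and $\widehat Y\widehat{\widetilde Y}$ carry $P^{2n+1}Q^{2m}$ and $P^{2n+1}Q^{2m+2}$. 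The mixed terms $Y\widehat{\widetilde Y}$ and $\widetilde Y\widehat Y$ carry $P^{2n+1}Q^{2m+1}$. The inhomogeneous term is $\delta(\wp(k)-\wp(q))^2P^{2n}Q^{2m}$.

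Next I divide through by $P^{2n+1}Q^{2m+1}$. The left side becomes
\[
P(X\widehat X+\widetilde X\widehat{\widetilde X})-Q(X\widetilde X+\widehat X\widehat{\widetilde X}),
\]
using $s^2\mu^2=P^2$ and $t^2\nu^2=Q^2$. The right side becomes $(\wp(a)-\wp(b))\bigl(\widetilde X\widehat X+X\widehat{\widetilde X}+\Delta/(PQ)\bigr)$ with $\Delta$ as in \eqref{eq:Q3-deltavalue}. Since $\wp(a)-\wp(b)=\alpha^2-\beta^2$ by \eqref{eq:ell-para}, this is exactly \eqref{eq:Q3delta}. The one point needing care is consistency of square-root branches: $P^n$ must equal $s^n\mu^n$, which holds by the convention, stated after \eqref{eq:st}, that the same branch is fixed for $s$, $t$ and likewise $\mu$, $\nu$.

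Then I verify the curve statement. From \eqref{eq:st} and \eqref{eq:munu} we have $P^2=(\alpha^2-\mathbf k^2)(\alpha^2-\mathbf q^2)$ and similarly for $Q$, which is the Jacobi quartic $\Gamma_{\mathbf k,\mathbf q}$ in \eqref{eq:Q3-ell-curve}. Parts (i) and (ii) then follow immediately. For (i), the plane wave $Y$ in \eqref{Y} solves \eqref{eq:II-delta} with $\delta$ given by \eqref{eq:II-deltavalue}. For (ii), $Y[N]$ in \eqref{eq:YN} solves it with the same $\delta$ by the preceding theorem. The case $AD-BC=0$ gives $\delta=0$, hence $\Delta=0$, which is Q3(0).

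The main obstacle I expect is not here but hidden in (ii): the claim that $Y[N]$ gives the same $\delta$ as the seed. I take this from the preceding theorem. If it had to be rechecked, I would invoke Theorem \ref{thm:leKdV-I-rho}, which gives $\rho_N(z)=-\wp'(z)\mathcal N_N(z)^2$. Then the general formula \eqref{eq:II-deltavalue-g} applied to $y_{ij}[N]$ yields $\rho_N(k)\rho_N(q)(AD-BC)$, and dividing $Y$ by $\mathcal N_N(k)\mathcal N_N(q)$ scales $\delta$ by $\mathcal N_N(k)^{-2}\mathcal N_N(q)^{-2}$. This recovers $\wp'(k)\wp'(q)(AD-BC)$, since the two minus signs cancel.
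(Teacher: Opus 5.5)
Your proposal is correct and follows the paper's route. The paper obtains Theorem~\ref{thm:Q3} by applying the gauge $Y=s^n\mu^n t^m\nu^m X$ (the transformation of \cite{WZZZ-JPA-2024}) to the plane-wave and Casoratian solutions of the leKdV-II($\delta$) equation. As in your sketch, the value of $\delta$ for $Y[N]$ comes from $\rho_N(z)=-\wp'(z)\mathcal N_N(z)^2$ (Theorem~\ref{thm:leKdV-I-rho}) together with the normalization by $\mathcal N_N(k)\mathcal N_N(q)$. Your term-by-term power count, the check $P^2=(\alpha^2-\mathbf{k}^2)(\alpha^2-\mathbf{q}^2)$, and the $AD-BC=0$ specialization make explicit what the paper leaves implicit; your branch concern is vacuous, since $P:=s\mu$ and $n,m$ are integers.
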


It is notable that, by applying the transformation
\begin{equation}
y=\left(\Phi_a(-k)\Phi_a(-q)\right)^n\left(\Phi_b(-k)\Phi_b(-q)\right)^m S
\end{equation}
 to the leKdV-II equation  \eqref{eq:leKdV-II} and using relation \eqref{eq:add-1}, we obtain the following equation
 \begin{equation}
 \left(\Phi_a(k) S-\Phi_a(-q)\widetilde{S}\right) \left(\Phi_a(q) \widehat{S}-\Phi_a(-k)\widetilde{\widehat{S}}\right)
 = \left(\Phi_b(k) S-\Phi_b(-q)\widehat{S}\right) \left(\Phi_b(q) \widetilde{S}-\Phi_b(-k)\widetilde{\widehat{S}}\right).
 \end{equation}
 This equation also appears in \cite{NA-IMRN-2010}, and is related to the
 Nijhoff--Quispel--Capel (NQC) equation introduced in \cite{NQC-PLA-1983}.

\section{Degeneration to the Q1(\texorpdfstring{$\Delta$}{Delta}) and Q2 Equations}\label{sec:4}

It is known that in the ABS list  Q1($\Delta$) and Q2
can be obtained from the degeneration of Q3($\Delta$) equation \cite{Atk-JPA-2008},
and the degeneration has been used to find soliton and rational solutions
for Q1($\Delta$) and Q2 equations \cite{NAH-JPA-2009,ZZ-JNMP-2018}.
In what follows, we derive elliptic solitons for Q1($\Delta$) and Q2
via implementing degenerations on  Q3($\Delta$) and its solutions.

\subsection{Degeneration to Q1(\texorpdfstring{$\Delta$}{Delta})}\label{sec:4-1}

We next derive solutions of the Q1($\Delta$) equation as the coalescence limit
\begin{equation*}
q=k+\varepsilon, \quad \varepsilon \to 0,
\end{equation*}
of the leKdV-II($\delta$) equation \eqref{eq:II-delta}.
Here the coalescence is taken with respect to the elliptic parameters $k$ and $q$.

First, let us work on the plane wave solution \eqref{eq:Yyij-e},
from which we exhibit the degeneration in detail.
Expanding the functions $y_{ij,[e]}$ defined in \eqref{eq:yij-e} in powers of $\varepsilon$ yields
\begin{equation}
y_{ij,[e]}=[\varphi_{i,[e]}(k),~ \varphi_{j,[e]}(k)]^{-}_{[0],[n]}
+\varepsilon \, x_{ij,[e]}
+O(\varepsilon^2),
\end{equation}
where the first-order coefficient is
\begin{equation}
x_{ij,[e]}
=[\varphi_{i,[e]}(k), ~\partial_k  \varphi_{j,[e]}(k)]^{-}_{[0],[n]}.
\end{equation}
Noting that
\begin{equation}\label{F}
\begin{split}
&\partial_k \varphi_{i,[e]}(k) ={F}_i \, \varphi_{i,[e]}(k),
\qquad i=1,2,\\
 &   {F}_1=n(\zeta(k+a)-\zeta(k))+m(\zeta(k+b)-\zeta(k))+\zeta(k-\xi)-\zeta(k),\\
&    {F}_2=n(\zeta(k-a)-\zeta(k))+m(\zeta(k-b)-\zeta(k))+\zeta(k+\xi)-\zeta(k),
  \end{split}
\end{equation}
where $\xi=na+mb+\xi_0$ is defined  in \eqref{eq:u0e},
the coefficients $x_{ij,[e]}$ are obtained explicitly below
\begin{subequations}\label{xij-e}
\begin{align}
x_{11,[e]}&=\varphi_{1,[e]}(k)\widetilde{\varphi}_{1,[e]}(k)\chi_{a+k,-k}(\xi-k),\\
x_{12,[e]}&=\varphi_{1,[e]}(k)\widetilde{\varphi}_{2,[e]}(k)\widetilde{F}_2
-\widetilde{\varphi}_{1,[e]}(k)\varphi_{2,[e]}(k) {F}_2,\\
x_{21,[e]}&=\varphi_{2,[e]}(k)\widetilde{\varphi}_{1,[e]}(k)\widetilde{F}_1
-\widetilde{\varphi}_{2,[e]}(k)\varphi_{1,[e]}(k) {F}_1,\\
x_{22,[e]}&=-\varphi_{2,[e]}(k)\widetilde{\varphi}_{2,[e]}(k)\chi_{a-k,k}(\xi+k),
\end{align}
\end{subequations}
where $\chi_{u,v}(z)$ is defined  as in \eqref{eq:chi}.
Under the condition $B=C$, the zeroth-order term  in the expansion of $Y$ vanishes, yielding
\begin{equation}
Y
=
\varepsilon \left( Ax_{11,[e]}
+B(x_{12,[e]}+x_{21,[e]})+Dx_{22,[e]}\right)
+O(\varepsilon^2).
\end{equation}
In addition, noticing the expansions
\begin{align*}
\delta  \to \wp'(k)^{2}(AD-B^{2})+O(\varepsilon),\qquad
 (\wp(k)-\wp(q))^2\to \varepsilon^2\wp'(k)^2+O(\varepsilon^3)
\end{align*}
and introducing function
\begin{equation}\label{Z-e}
Z=s^{-2n}t^{-2m} \left( Ax_{11,[e]}
+B(x_{12,[e]}+x_{21,[e]})+Dx_{22,[e]}\right),
\end{equation}
 the leKdV-II($\delta$) equation  \eqref{eq:II-delta} reduces, in the limit $\varepsilon\to0$,   to the Q1($\Delta$) equation
\begin{equation}\label{eq:Q1delta}
s^{2}(Z-\widetilde{Z})(\widehat{Z}-\widehat{\widetilde{Z}})
-
t^{2}(Z-\widehat{Z})(\widetilde{Z}-\widehat{\widetilde{Z}})
=
\Delta
\left(
\frac{1}{t^{2}}-\frac{1}{s^{2}}
\right),
\end{equation}
where
\begin{equation}\label{eq:Q1delta-value}
\Delta=\wp'(k)^4 (AD-B^2).
\end{equation}
Thus, equation \eqref{Z-e} together with \eqref{xij-e} provides the plane wave solutions to the
 Q1($\Delta$) equation \eqref{eq:Q1delta}.

In the general case, we can conduct the above limiting procedure
on the elliptic $N$-soliton solution  given in  \eqref{eq:YN},
and obtain explicit elliptic
$N$-soliton solution for the Q1($\Delta$) equation.
We especially note that the denominator in $\varphi_j[N](k)$ is independent of $k$, because the
last row is actually removed by the column $\boldsymbol{\sigma}$.
Differentiating the numerator with respect to $k$ amounts to replacing the last row
by its derivative with respect to $k$, which leads to the following theorem.
\begin{theorem}
We introduce
\begin{equation}\label{eq:partial-varphi-N}
\bm{\mathcal{F}}_j[N](k):= \partial_k \varphi_j[N](k)
=
\frac{
\left|\widehat{N}(\bm{\mathcal{F}}_j)\right|_{[n]}
}{
\left|
\widehat{N-1}(\bm{\mathcal{F}}_j),
~\boldsymbol{\sigma}
\right|_{[n]}
},  \qquad j=1,2
\end{equation}
where
\begin{equation}
\bm{\mathcal{F}}_j
=
\left(
\phi_{1,[e]},
\phi_{2,[e]},
\ldots,
\phi_{N,[e]},
{F}_j\varphi_{j,[e]}(k)
\right)^T.
\end{equation}
Consequently, the Q1($\Delta$) equation \eqref{eq:Q1delta} with
$\Delta$ defined by \eqref{eq:Q1delta-value},
admits the following $N$-soliton solution:
\begin{equation}
Z[N]
=
s^{-2n}t^{-2m} \mathcal N_N(k)^{-2}
\left(
Ax_{11}[N]
+
B\bigl(x_{12}[N]+x_{21}[N]\bigr)
+
Dx_{22}[N]
\right),
\end{equation}
where $\mathcal N_N(k)$ is defined in \eqref{eq:factor}, and
\begin{equation}\label{eq:xijN}
x_{ij}[N]
=
\varphi_i[N](k)
\widetilde{\bm{\mathcal{F}}}_j[N](k)
-
\widetilde{\varphi}_i[N](k)
\bm{\mathcal{F}}_j[N](k),
\qquad i,j=1,2.
\end{equation}

\end{theorem}

\subsection{Degeneration to  Q2}\label{sec:4-2}

We derive the Q2 equation by taking the  limit
$q=k+\varepsilon$, $\varepsilon\to 0$,  in the Q3($\Delta$) equation \eqref{eq:Q3delta},
where we impose the normalization $AD-BC=1$
in $\Delta$, i.e., (cf.\eqref{eq:Q3-deltavalue} and \eqref{eq:II-deltavalue-g})
\begin{equation*}
  \Delta=(\wp(k)-\wp(q))^2 \rho(k)\rho(q).
\end{equation*}
The arguments involved in the Q3($\Delta$) equation \eqref{eq:Q3delta}  have the following
 expansions in terms of $\varepsilon$:
\begin{subequations}\label{eq:series}
\begin{align}
&\rho(q)\to  \rho(k)+\varepsilon \rho_k(k)+\frac{ \varepsilon^2}{2} \rho_{kk}(k)+O(\varepsilon^3), \\
&\mu \to s- \varepsilon \frac{\wp'(k)}{2s}
     -\varepsilon^2\left(\frac{\wp''(k)}{4s}+\frac{\wp'(k)^2}{8 s^3}\right)\nonumber\\
    &\qquad-\varepsilon^3\left(\frac{\wp^{(3)}(k)}{12 s}+\frac{\wp'(k) \wp''(k)}{8 s^3}+\frac{\wp'(k)^3}{16 s^5}\right)\nonumber\\
    &\qquad-\varepsilon^4\left(\frac{\wp^{(4)}(k)}{48 s}+\frac{\wp^{(3)}(k) \wp'(k)}{24 s^3}+\frac{\wp''(k)^2}{32 s^3}+\frac{3 \wp'(k)^2 \wp''(k)}{32 s^5}+\frac{5 \wp'(k)^4}{128 s^7}\right)+O(\varepsilon^5),\\
&\frac{(\wp(k)-\wp(q))^2}{st\mu\nu}\to  \varepsilon^2\frac{\wp'(k)^2}{s^2t^2}
           +\varepsilon^3 \left(\frac{\wp'(k)\wp''(k)}{s^2t^2}+\frac{(s^2+t^2)\wp'(k)^3}{2 s^4 t^4}\right)\nonumber \\
          &+\varepsilon^4\left(\frac{\wp''(k)^2}{4 s^2 t^2}+\frac{\wp^{(3)}(k) \wp'(k)}{3 s^2 t^2}+\frac{3 \wp'(k)^2 \wp''(k)(s^2+t^2)+\wp'(k)^4}{4 s^4 t^4}
         +\frac{3 \wp'(k)^4(s^4+t^4)}{8 s^6 t^6}\right)+O(\varepsilon^5),\\
&\mu^{n}\nu^{m}\to \Lambda^{\frac12}
    +\frac{\varepsilon}{2}\Lambda_k\Lambda^{-\frac12}
    +\varepsilon^2\left(\frac14\Lambda_{kk}\Lambda^{-\frac12}-\frac18\Lambda_k^2\Lambda^{-\frac32}\right)
    +O(\varepsilon^3).
\end{align}
\end{subequations}
Here
\begin{equation}\label{Lambda}
\Lambda=s^{2n}t^{2m}=(\wp(a)-\wp(k))^n(\wp(b)-\wp(k))^m
\end{equation}
and $\Lambda_k=\partial_k \Lambda,~ \Lambda_{kk}=\partial^2_k \Lambda$.
We assume that the solution $X$   of the Q3($\Delta$) equation admits the following expansion\footnote{
This form is reasonable due to the definition of $y_{ij}$ in \eqref{eq:yij}
and the discrete Abel formula \eqref{eq:Abel-1}.}
\begin{equation}\label{eq:X-expansion}
X
=
2\rho(k)+\varepsilon \rho_k(k)+\varepsilon^2\rho(k)\,\Xi
+O(\varepsilon^3),
\end{equation}
where $\Xi=\Xi(n,m;s,t)$.
Substituting the expansions \eqref{eq:series} into the Q3($\Delta$) equation \eqref{eq:Q3delta}
 and collecting the first nonzero term,  which occurs at order  $\varepsilon^4$, yields
\begin{equation}
\begin{aligned}
&\frac{\wp'(k)^2}{4}
\left(
\frac1{t^2}-\frac1{s^2}
\right)
\left(\Xi+\widetilde{\Xi}+\widehat{\Xi}+\widetilde{\widehat{\Xi}}+\frac{\rho_k(k)^2}{\rho(k)^2}-2\frac{\rho_{kk}(k)}{\rho(k)}\right)\\
&=-s^2(\Xi-\widetilde{\Xi})(\widehat{\Xi}-\widehat{\widetilde{\Xi}})
+t^2(\Xi-\widehat{\Xi})(\widetilde{\Xi}-\widehat{\widetilde{\Xi}})
+\frac{\wp'(k)^4}{16}
\frac{s^6-2s^4t^2+2s^2t^4-t^6}{s^6t^6}.
\end{aligned}
\end{equation}
We thus obtain the  standard form of the Q2 equation
\begin{equation}\label{eq:Q2}
\mathcal{P}(W-\widehat{W})(\widetilde{W}-\widehat{\widetilde{W}})
-\mathcal{Q}(W-\widetilde{W})(\widehat{W}-\widehat{\widetilde{W}})
+\mathcal{PQ}(\mathcal{P-Q})\bigl(W+\widetilde{W}+\widehat{W}+\widehat{\widetilde{W}}
-\mathcal{P}^2+\mathcal{PQ}-\mathcal{Q}^2\bigr)=0
\end{equation}
via the transformation
\begin{equation}\label{W}
W=\frac{1}{\wp'(k)^2}\left(4\Xi+\frac{\rho_k(k)^2}{\rho(k)^2}-2\frac{\rho_{kk}(k)}{\rho(k)}\right), \qquad  \mathcal{P}=\frac{1}{s^2}, \qquad \mathcal{Q}=\frac{1}{t^2}.
\end{equation}

To derive explicit solutions,
the expansion of  $X$ in \eqref{eq:X-expansion}, together with the relation \eqref{eq:X-Y} between  $X$ and $Y$, suggests the following expansion for $Y$
 \begin{equation}\label{eq:series-Y-1}
Y
=
2\rho(k)\Lambda
+\varepsilon(\rho(k)\Lambda)_k
+\varepsilon^2
\left(
\frac{\rho(k)}{2}\Lambda_{kk}+\frac{\rho_k(k)}{2}\Lambda_{k}
-\frac{\rho(k)\Lambda_k^2}{4\Lambda}
+\rho(k)\Lambda\Xi
\right)
+O(\varepsilon^3),
\end{equation}
Setting $B=-C$, $A=\varepsilon\mathcal{A}$ and $D=\varepsilon\mathcal{D}$,
the normalization $AD-BC=1$ then gives
\begin{equation}
B=1-\frac{\mathcal{AD}}{2}\varepsilon^2+O(\varepsilon^4).
\end{equation}
For the case of plane wave solution,
expanding the eigenfunction representation  \eqref{eq:Yyij-e} of $Y$  up to $O(\varepsilon^2)$
and comparing the result with
\eqref{eq:series-Y-1} yield the representation of $\Xi$
\begin{align}\label{4.22}
\Xi
=&
\frac{\Lambda_k^2}{4\Lambda^{2}}
+\frac{\rho_k(k)\Lambda_k}{2\rho(k)\Lambda}
+\frac{\rho_{kk}(k)}{2\rho(k)}
-\mathcal{AD}\nonumber\\
+&\frac{
\mathcal{A}x_{11,[e]}
+\mathcal{D}x_{22,[e]}
+\widetilde{{F}}_1 {F}_2\widetilde{\varphi}_{1,[e]}
\varphi_{2,[e]}
-{F}_1\widetilde{{F}}_2\varphi_{1,[e]}\widetilde{\varphi}_{2,[e]}
}{\rho(k)\Lambda},
\end{align}
where $x_{ij,[e]}$ are defined in \eqref{xij-e}, $F_1$ and $F_2$ are defined in \eqref{F},
and $\rho(k)=-\wp'(k)$.
Substituting \eqref{4.22} into \eqref{W},
we can obtain the plane wave solution for the Q2 equation \eqref{eq:Q2}.

We now derive the elliptic $N$-soliton solution of the Q2 equation.
We start from the unnormalized solution of the Q3($\Delta$) equation:
\begin{equation*}
X[N]=
s^{-n}\mu^{-n}t^{-m}\nu^{-m}(Ay_{11}[N]+By_{12}[N]+Cy_{21}[N]+Dy_{22}[N]),
\end{equation*}
with $\Delta$ replaced by
 \begin{equation*}
\Delta_N=
\bigl(\wp(k)-\wp(q)\bigr)^2
\rho_N(k)\rho_N(q)(AD-BC),
\end{equation*}
where $y_{ij}[N]$ and  $\rho_N(k)$ are given by \eqref{eq:yijN} and \eqref{eq:rho-N}  respectively.
Repeating the preceding analysis with the replacements
$x_{ij}\to x_{ij}[N]$,
$\rho(k)\to \rho_N(k)$,
$F_i\varphi_{i,[e]}\to \bm{\mathcal{F}}_i[N](k) $,
we  obtain an explicit expression for  $\Xi[N]$ in terms of  the Darboux-transformed eigenfunctions.
The resulting elliptic $N$-soliton solution of the
Q2 equation is stated in the following theorem.

\begin{theorem}\label{thm:Q2-N}
  The Q2 equation \eqref{eq:Q2} admits the elliptic $N$-soliton
  solution
  \begin{equation}\label{eq:Q2-N-solution}
  \begin{split}
    W[N]&=\frac{1}{\wp'(k)^2}\left( \left(\frac{\Lambda_k}{\Lambda}+\frac{\rho_{N,k}(k)}{\rho_N(k)}\right)^2
    -4\mathcal{AD}\right.\\
    &+\left.\frac{4}{\rho_N(k)\Lambda}
    \left(\mathcal{A}x_{11}[N]
+\mathcal{D}x_{22}[N]
+\widetilde{\bm{\mathcal{F}}}_1[N](k)\bm{\mathcal{F}}_2[N](k)
-\bm{\mathcal{F}}_1[N](k)\widetilde{\bm{\mathcal{F}}}_2[N](k)\right)\right),
\end{split}
  \end{equation}
where $\rho_{N}(k)$,  $\bm{\mathcal{F}}_j[N](k)$, $x_{ij}[N]$,  $i,j=1,2$, and $\Lambda$,
are defined by \eqref{eq:rho-N}, \eqref{eq:partial-varphi-N}, \eqref{eq:xijN} and  \eqref{Lambda},
respectively.
\end{theorem}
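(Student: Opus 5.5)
The plan is to obtain Theorem~\ref{thm:Q2-N} as the $\varepsilon\to0$ limit of the unnormalized $N$-soliton solution of Q3($\Delta_N$). The argument for the plane wave solution is repeated, with every quantity attached to the seed replaced by its Darboux iterate. The structural facts that allow this are the following.
\begin{itemize}
\item By Theorem~\ref{thm:leKdV-I-rho}, the transformed pair $\varphi_1[N](k)$, $\varphi_2[N](k)$ obeys the same discrete Abel formula \eqref{eq:Abel-2}, with $\rho$ replaced by $\rho_N(k)$ and the same $\Lambda=s^{2n}t^{2m}$.
\item The denominators in \eqref{eq:sol-leKdV-N} do not depend on the spectral parameter, so $\partial_k\varphi_j[N](k)=\bm{\mathcal F}_j[N](k)$ as in \eqref{eq:partial-varphi-N}.
\item The function $X[N]$ solves Q3($\Delta_N$) with $\Delta_N=(\wp(k)-\wp(q))^2\rho_N(k)\rho_N(q)(AD-BC)$. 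This follows from Theorem~\ref{thm:Q3} together with the Remark, since \eqref{eq:II-deltavalue-g} holds with $\rho\to\rho_N$.
\end{itemize}
Once these are in place, the order-$\varepsilon^4$ computation leading from \eqref{eq:Q3delta} to \eqref{eq:Q2} uses only the expansions \eqref{eq:series} and the ansatz \eqref{eq:X-expansion}. It therefore carries over verbatim with $\rho\to\rho_N$. The Q2 equation \eqref{eq:Q2} then holds for $W=\wp'(k)^{-2}\bigl(4\Xi+\rho_{N,k}^2/\rho_N^2-2\rho_{N,kk}/\rho_N\bigr)$. It remains to show that $Y[N]$ (unnormalized) has the expansion \eqref{eq:series-Y-1} with $\rho\to\rho_N$, and to read off $\Xi[N]$.

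\textbf{Expansion of $Y[N]$.} Set $q=k+\varepsilon$, $B=-C=1-\tfrac12\mathcal{AD}\varepsilon^2+O(\varepsilon^4)$, $A=\varepsilon\mathcal A$ and $D=\varepsilon\mathcal D$. Taylor-expanding $\varphi_j[N](q)$ in $\varepsilon$ gives
\begin{equation*}
y_{ij}[N]=[\varphi_i[N](k),\varphi_j[N](k)]^-_{[0],[n]}+\varepsilon\,x_{ij}[N]+\tfrac{\varepsilon^2}{2}[\varphi_i[N](k),\partial_k^2\varphi_j[N](k)]^-_{[0],[n]}+O(\varepsilon^3).
\end{equation*}
By the Abel formula, the zeroth-order contribution is $B(y_{12}-y_{21})=2\rho_N\Lambda$, because $y_{11}$ and $y_{22}$ vanish at $\varepsilon=0$ and $y_{21}=-y_{12}$ there. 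The order-$\varepsilon$ terms are $\mathcal A\,y_{11}^{(0)}=0$ and $\mathcal D\,y_{22}^{(0)}=0$, together with $x_{12}[N]-x_{21}[N]$. The last difference must equal $(\rho_N\Lambda)_k$. To see this, differentiate the Abel identity $[\varphi_1[N],\varphi_2[N]]^-=\rho_N\Lambda$ in $k$; the derivative splits exactly into the two $x$-type brackets. At order $\varepsilon^2$ the contributions are:
\begin{itemize}
\item $\mathcal A x_{11}[N]+\mathcal D x_{22}[N]$;
\item $-\tfrac12\mathcal{AD}\cdot2\rho_N\Lambda$;
\item $\tfrac12\bigl([\varphi_1,\partial_k^2\varphi_2]^--[\varphi_2,\partial_k^2\varphi_1]^-\bigr)$.
\end{itemize}
Differentiating the Abel identity twice gives $(\rho_N\Lambda)_{kk}=[\partial_k^2\varphi_1,\varphi_2]^-+2[\partial_k\varphi_1,\partial_k\varphi_2]^-+[\varphi_1,\partial_k^2\varphi_2]^-$. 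Using this, the third contribution becomes $\tfrac12(\rho_N\Lambda)_{kk}-[\bm{\mathcal F}_1[N],\bm{\mathcal F}_2[N]]^-_{[0],[n]}$. Matching the result against \eqref{eq:series-Y-1} and solving for $\Xi$ produces the expression \eqref{4.22} with the stated replacements. Substituting into the formula for $W$ and simplifying $\Lambda_k^2/\Lambda^2+2\rho_{N,k}\Lambda_k/(\rho_N\Lambda)+\rho_{N,k}^2/\rho_N^2=(\Lambda_k/\Lambda+\rho_{N,k}/\rho_N)^2$ yields \eqref{eq:Q2-N-solution}. In this step the $\rho_{N,kk}$ terms cancel between $4\Xi$ and $-2\rho_{N,kk}/\rho_N$.

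\textbf{Main obstacle.} I expect the main difficulty to be the bookkeeping of the factor $s^{-n}\mu^{-n}t^{-m}\nu^{-m}$ in $X[N]$. Its expansion, given in the last line of \eqref{eq:series}, must be combined with the expansion of $Y[N]$ so that $X[N]$ has exactly the form \eqref{eq:X-expansion}. In particular, the $\varepsilon$-coefficient must come out as $\rho_{N,k}$ with no residual $n,m$-dependence, and the $\Lambda_k^2/\Lambda$ term in \eqref{eq:series-Y-1} must be accounted for. The plan here is to take \eqref{eq:series-Y-1}, which was designed in the plane-wave case precisely so that multiplying by the expansion of $(\mu^n\nu^m s^nt^m)^{-1}$ reproduces \eqref{eq:X-expansion}, and to verify that only the identities $Y^{(0)}=2\rho_N\Lambda$ and $Y^{(1)}=(\rho_N\Lambda)_k$ were used. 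This is an $N$-independent check, since it does not involve the determinants at all. The only remaining point is to confirm that the sign conventions for the $\bm{\mathcal F}$-bracket agree with those of the plane-wave term $\widetilde F_1F_2\widetilde\varphi_{1,[e]}\varphi_{2,[e]}-F_1\widetilde F_2\varphi_{1,[e]}\widetilde\varphi_{2,[e]}$.
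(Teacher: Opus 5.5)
Your proposal is correct and follows the paper's own route. You degenerate the unnormalized $X[N]$, which solves Q3($\Delta_N$), using $q=k+\varepsilon$, $B=-C$, $A=\varepsilon\mathcal A$, $D=\varepsilon\mathcal D$, and redo the plane-wave matching with $\rho\to\rho_N$ and $F_i\varphi_{i,[e]}\to\bm{\mathcal F}_i[N](k)$; your once- and twice-differentiated Abel identity for the Darboux iterates supplies exactly the detail the paper compresses into ``repeating the preceding analysis.'' One minor point: the Remark you cite concerns leKdV-I rather than leKdV-II, so the cleaner justification of $\Delta_N$ is to rescale the normalized solution of Theorem~\ref{thm:Q3} by $\mathcal N_N(k)\mathcal N_N(q)$, which multiplies $\Delta$ by $\mathcal N_N(k)^2\mathcal N_N(q)^2$.
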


Note that Q2 is special in the ABS list in the sense that its bilinear form is not found so far
(cf.\cite{WZM-PD-2024}). We hope the above Casoratian formulation of its solutions
will be helpful in finding a bilinearization of this equation.

\section{Conclusions}\label{sec:5}

In this paper we have shown that the known Darboux transformation
of the discrete Schr\"odinger spectral problem \eqref{eq:laxpair-H1-A}
is valid as well for \eqref{eq:laxpair-H1-B} and \eqref{eq:laxpair-H1-C}.
Thus we achieved a full Darboux transformation for the Lax triplet
\eqref{eq:laxpair-H1} of the lpKdV equation.
As a result, we obtained elliptic solitons for the lpKdV equation
and also explicit Casoratian formulae for the corresponding eigenfunction equations.
This enables us to present elliptic solitons in Casoratian form for the
eigenfunction equations
of the lattice KdV type, namely,
H3($0$), H3($\delta$), Q1($\Delta$), Q2 and Q3($\Delta$) in the ABS list.
In the construction of these eigenfunction equations, the discrete Abel formula \eqref{eq:Abel-1}
plays a key role, in which an `integration' constant is connected to the parameter $\delta$
(or $\Delta$). In this paper, we have determined the value of this constant
for the (elliptic) plane wave case and the elliptic soliton case.
This allowed us to formulate the eigenfunction equations together with
their parameters and elliptic plane wave solutions and elliptic $N$-solitons.
Solutions of Q1($\Delta$) and Q2 were obtained as degeneration of  Q3($\Delta$).
We hope the research of this paper will be helpful in understanding eigenfunction equations
and their various solutions related to higher order spectral problems.

\vskip 20pt

\section*{Acknowledgments}
	The work of XL is supported by the NSFC Grant (12301309) and was also supported by the China Scholarship Council.
	The work of YS is supported by the NSFC Grant (12426613).
	The work of DJZ is supported by the NSFC Grant (12271334, 12671301) and the NSFC Bilateral Program (1241540016).

\vskip 20pt

\appendix
\section{Determinant identities and elliptic function formulae}
\label{app:det-elliptic}

We first recall the Desnanot--Jacobi identity\cite[Theorem~2.6]{Gragg-SIAM-1972} (also known as
Dodgson condensation).
We state it in a notation that distinguishes a matrix from the determinants of its minors.

\begin{theorem}[Desnanot-Jacobi identity]
\label{lemma-DJ}
Let $\mathcal{M}=(m_{ij})$ be an $(N+3)\times(N+3)$ matrix and
$M=\det \mathcal{M}$.  Denote by $M_i^{\,j}$ the determinant obtained by
deleting the $i$th  row and  $j$th column from $\mathcal{M}$, and by
$M_{i,k}^{\,j,l}$ the determinant obtained by deleting rows $i,k$ and
columns $j,l$.  Then
\begin{equation}
M_{N+1,N+2}^{\,1,N+2}\,M
=
M_{N+1}^{\,1}\,M_{N+2}^{\,N+2}
-
M_{N+1}^{\,N+2}\,M_{N+2}^{\,1}.
\label{eq:DJ}
\end{equation}
\end{theorem}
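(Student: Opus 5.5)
We will derive \eqref{eq:DJ} as a special case of Jacobi's theorem on the minors of the adjugate matrix. Write $n_0=N+3$ for the size of $\mathcal M$ and let $C_{pq}=(-1)^{p+q}M_p^{\,q}$ be the cofactors, so that $\operatorname{adj}(\mathcal M)_{qp}=C_{pq}$ and $\mathcal M\,\operatorname{adj}(\mathcal M)=M\,I$. We fix $i=N+1$ and $k=N+2$ (the deleted rows), and $j=1$ and $l=N+2$ (the deleted columns). The identity to prove is then
\[
M_{i,k}^{\,j,l}\,M=M_i^{\,j}M_k^{\,l}-M_i^{\,l}M_k^{\,j}.
\]
Both sides are polynomials in the entries $m_{pq}$. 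It therefore suffices to prove it when $M\neq0$; the general case follows by density, or by viewing it as an identity in the polynomial ring $\mathbb Z[m_{pq}]$.

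\textbf{Construction.} Let $X$ be the identity matrix of size $n_0$ with its $j$th column replaced by the $i$th column of $\operatorname{adj}(\mathcal M)$, and its $l$th column replaced by the $k$th column of $\operatorname{adj}(\mathcal M)$. Since $\mathcal M\,\operatorname{adj}(\mathcal M)=MI$, the product $\mathcal M X$ has the following columns:
\begin{itemize}
\item column $j$ equals $M e_i$;
\item column $l$ equals $M e_k$;
\item every other column $q$ is the $q$th column of $\mathcal M$.
\end{itemize}
Expanding $\det(\mathcal MX)$ along columns $j$ and $l$ (Laplace expansion) picks out rows $i,k$. This gives
\[
\det(\mathcal M X)=\pm M^2\,M_{i,k}^{\,j,l}.
\]

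\textbf{Evaluating $\det X$.} Expanding along the non-trivial columns, $\det X$ reduces to the $2\times2$ minor of $X$ on rows $\{j,l\}$ and columns $\{j,l\}$. That minor is
\[
\det X=\operatorname{adj}(\mathcal M)_{ji}\,\operatorname{adj}(\mathcal M)_{lk}-\operatorname{adj}(\mathcal M)_{jk}\,\operatorname{adj}(\mathcal M)_{li}
=C_{ij}C_{kl}-C_{kj}C_{il}.
\]
Factoring out the common sign $(-1)^{i+j+k+l}$ turns this into $(-1)^{i+j+k+l}\bigl(M_i^{\,j}M_k^{\,l}-M_i^{\,l}M_k^{\,j}\bigr)$.

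\textbf{Conclusion.} Using $\det(\mathcal MX)=M\det X$ and dividing by $M\neq0$ gives
\[
\pm M\,M_{i,k}^{\,j,l}=(-1)^{i+j+k+l}\bigl(M_i^{\,j}M_k^{\,l}-M_i^{\,l}M_k^{\,j}\bigr).
\]
With $i=N+1$, $k=N+2$, $j=1$, $l=N+2$ this is exactly \eqref{eq:DJ}, provided the signs match.

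\textbf{Main obstacle.} The only delicate point is this sign bookkeeping. The Laplace sign in $\det(\mathcal MX)$ comes from moving the two special columns, which equal $Me_i$ and $Me_k$, to the front. It equals $(-1)^{i+k+j+l}$ times the sign of the ordering of the pairs $(j\mapsto i,\ l\mapsto k)$. Since $i<k$ and $j<l$ are matched in order, that ordering sign is $+1$. The two factors $(-1)^{i+j+k+l}$ therefore cancel and the identity holds with a plus sign. I would check this explicitly first in the $3\times3$ case ($N=0$) as a sanity test of the sign conventions.
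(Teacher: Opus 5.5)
Your argument is correct. The paper gives no proof of this identity: it quotes it as a classical fact (Dodgson condensation) and cites Gragg, Theorem~2.6. So your derivation is a self-contained alternative rather than a reconstruction of anything in the paper.

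Your route is the standard adjugate proof, namely the $2\times2$ case of Jacobi's theorem on minors of $\operatorname{adj}(\mathcal M)$. The steps all hold:
\begin{itemize}
\item comparing $\det(\mathcal MX)=M\det X$ with $\det X=C_{ij}C_{kl}-C_{kj}C_{il}$ is valid;
\item cancelling $M$ in the integral domain $\mathbb Z[m_{pq}]$ is legitimate;
\item the coincidence $l=k=N+2$ causes no trouble in building $X$.
\end{itemize}

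The step you flag as the main obstacle is actually immediate. Expand $\det(\mathcal MX)$ first along column $j$: its single entry $M$ sits in row $i$, giving sign $(-1)^{i+j}$. Then expand along the former column $l$. Because $j<l$ and $i<k$, this column now sits in position $l-1$ with its single entry $M$ in position $k-1$, giving sign $(-1)^{k+l}$. Hence $\det(\mathcal MX)=(-1)^{i+j+k+l}M^2M_{i,k}^{\,j,l}$, and this factor cancels against the one in $\det X$. No $3\times3$ sanity check is needed.

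What your approach buys over the citation is generality: it proves $M\,M_{i,k}^{\,j,l}=M_i^{\,j}M_k^{\,l}-M_i^{\,l}M_k^{\,j}$ for every matrix size and every $i<k$, $j<l$. This matters for the paper's own use. In the proof of Theorem~\ref{thm:leKdV-I-rho}, the identity is applied to the $(N+2)\times(N+2)$ determinant $G$ with rows $N+1,N+2$ and columns $1,N+2$, i.e.\ with the last row and last column deleted. That is not literally the $(N+3)\times(N+3)$ statement as written, but your general form covers it directly. The citation, in turn, simply keeps the appendix short for a textbook identity.
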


Let $\sigma(z)$, $\zeta(z)$, and $\wp(z)$ be the Weierstrass $\sigma$, $\zeta$, and
elliptic $\wp$ functions of $z$  associated with the lattice $2w\mathbb Z+2w'\mathbb Z$.
They are related by
\begin{equation} \label{eq:wpp}
 \zeta(z)=\frac{\sigma'(z)}{\sigma(z)},
 \qquad
 \wp(z)=-\zeta'(z),
 \qquad \wp'(z)
 =-\frac{\sigma(2z)}{\sigma^4(z)}.
\end{equation}
The point $(\wp(z),\wp'(z))$ lies on the standard Weierstrass curve
\begin{align}\label{eq:ell-cur}
 y^2
 &=4x^3-g_2x-g_3
 =4(x-e)(x-e')(x-e''),
\end{align}
where
\begin{equation}
 e=\wp(w),\qquad e'=\wp(w'),\qquad
 e''=\wp(w+w')=-e-e',
\end{equation}
and
\begin{equation}
 g_2=-4(ee'+e'e''+e''e),
 \qquad
 g_3=4ee'e''.
\end{equation}
The zeros of $\wp'(z)$ are the  three  half-periods $w$, $w'$, and $w+w'$  modulo periods.
For $\omega\in\{w,w'\}$, the quasi-periodicity relations are
\begin{align}
 \zeta(z+2\omega)&=\zeta(z)+2\zeta(\omega),\\
 \sigma(z+2\omega)&=-e^{2\zeta(\omega)(z+\omega)}\sigma(z).
\end{align}

The Weierstrass $\sigma$ function satisfies the addition formula
\begin{align}\label{eq:add-5}
 &\sigma(x+a)\sigma(x-a)\sigma(y+b)\sigma(y-b)
 -\sigma(x+b)\sigma(x-b)\sigma(y+a)\sigma(y-a)
 \nonumber\\
 &\hspace{20mm}
 =\sigma(x+y)\sigma(x-y)\sigma(a+b)\sigma(a-b).
\end{align}
Another identity involving the $\sigma$ function is
\begin{align}
 \chi_{u,v}(z)
 &:=\zeta(u)+\zeta(v)+\zeta(z)-\zeta(u+v+z)
 \nonumber\\
 &=\frac{\sigma(u+v)\sigma(u+z)\sigma(v+z)}
 {\sigma(u)\sigma(v)\sigma(z)\sigma(u+v+z)}.
 \label{eq:chi}
\end{align}
The combination $\chi_{u,v}(z)$ is symmetric in $u,v,z$ and elliptic in each argument.
A related two-variable combination is
\begin{equation} \label{eq:add-2}
\eta_u(z)
 :=\zeta(z+u)-\zeta(z)-\zeta(u)
 =\frac{1}{2}\frac{\wp'(z)-\wp'(u)}{\wp(z)-\wp(u)},
\end{equation}
which satisfies  the relation
\begin{equation} \label{eq:add-3}
 \eta_u^2(z)
 =\wp(z)+\wp(u)+\wp(z+u).
\end{equation}

We next introduce the Lam\'e function
\begin{equation}
 \Psi_x(k)=\Phi_x(k)e^{-\zeta(k)x},  \qquad
  \text{with}~~ \Phi_x(k)=\frac{\sigma(x+k)}{\sigma(x)\sigma(k)},
\end{equation}
which has the symmetry relation
\begin{equation}
 \Psi_x(-k)=-\Psi_{-x}(k),
\end{equation}
by the oddness of $\sigma$ and $\zeta$.
$\Phi_x(k)$ satisfies
\begin{equation}\label{eq:add-1}
 \Phi_z(u)\Phi_z(-u)=\wp(z)-\wp(u).
\end{equation}
We illustrate the half-period identities with $w$, for which $e=\wp(w)$.
Quasi-periodicity gives
\begin{equation}
 \Phi_x(-w)=e^{-2\zeta(w)x}\Phi_x(w).
\end{equation}
Consequently, we have
 \begin{equation}
\Psi_x(w)=\Psi_x(-w)=-\Psi_{-x}(w),
\end{equation}
and
 \begin{equation}\label{eq:Lame-half-period}
 \bigl(\Psi_x(w)\bigr)^2
 =\Phi_x(w)\Phi_x(-w)
 =\wp(x)-e,
\end{equation}
with the aid of \eqref{eq:add-1}.
Similarly, evaluating \eqref{eq:add-2} at $z=w$ gives
\begin{equation}
 4\bigl(\eta_x(w)\bigr)^2\bigl(\wp(x)-e\bigr)^2
 =\wp'(x)^2.
\end{equation}
Combining this relation with the standard Weierstrass curve \eqref{eq:ell-cur}, we  obtain the parametrization
\begin{align*}
&(Y,X)=(\eta_x(w),\Psi_x(w))
\end{align*}
 of the elliptic curve
\begin{equation}\label{eq:Lame-curve}
Y^2=X^2+3e+gX^{-2},  \qquad g=(e-e')(e-e'')=3e^2-\frac{g_2}{4},
\end{equation}
which has been used in \cite{FN-SAPM-2022,NP-JNMP-2003,SZN-JMP-2017,NZZ-TMP-2025}.
Elliptic parameter curves also arise naturally in lattice integrable systems,
a prominent example being the $Q4$ equation\cite{HJN-book-2016}.

For  the parametrization relevant to the $Q3(\Delta)$ equation used in the paper, we consider  the Jacobi quartic form
\begin{equation}\label{eq:Q3-ell-curve}
\Gamma_{\mathbf{k},\mathbf{q}}
 =
 \left\{
 (x,y)\in\mathbb C^2:
 y^2=(x^2-\mathbf{k}^2)(x^2-\mathbf{q}^2)
 \right\},
\end{equation}
with  branch points  $x=\pm\mathbf{k}$ and $x=\pm\mathbf{q}$.
The parametrization  in Theorem~\ref{thm:Q3}
gives
\begin{equation}\label{eq:Q3-ell-points}
 P^2
 =
 (\alpha^2-\mathbf{k}^2)
 (\alpha^2-\mathbf{q}^2),
 \qquad
 Q^2
 =
 (\beta^2-\mathbf{k}^2)
 (\beta^2-\mathbf{q}^2),
\end{equation}
and hence
\[
 (\alpha,P),\;(\beta,Q)
 \in \Gamma_{\mathbf{k},\mathbf{q}}.
\]

\vskip 20pt

\bibliographystyle{amsalpha}

\end{document}